\newcommand{\namecite}[1]{\citeauthor{#1}~\shortcite{#1}}
\newcommand{\citenobrackets}[1]{\citeauthor{#1}~\citeyear{#1}}
\newcommand\citet{\namecite}
\theoremstyle{definition}   
\newtheorem{mydef}{Definition}
\newtheorem{remark}{Remark}
\theoremstyle{plain}   
\newtheorem{proposition}{Proposition}
\newtheorem{theorem}{Theorem}
\newtheorem{lemma}{Lemma}
\newtheorem{corollary}{Corollary}
\newcommand{\Exp}{\mathbb{E}}
\newcommand{\ashish}[1]{{\color{blue} [AS: {#1}]}}
\newcommand{\stefano}[1]{{\color{green} [SE: {#1}]}}
\newcommand{\js}[1]{}
\newcommand{\half}{{\frac{1}{2}}}
\newcommand{\limn}{\lim_{n \to \infty}}
\newcommand{\kbound}{{3.6 - \frac{5}{4}\log_2{\alpha}}}
\newcommand{\bigOh}{\mathrm{O}}
\newcommand{\I}{\mathbb{I}}
\newcommand\Tstrut{\rule{0pt}{2.6ex}}
\newcommand\minf{\tilde{f}}
\newcommand\numberthis{\refstepcounter{equation}\tag{\theequation}} 
\title{Closing the Gap Between Short and Long XORs for Model Counting}
\author{Shengjia Zhao \\  Computer Science Department 
\\ Tsinghua University \\ zhaosj12@mails.tsinghua.edu.cn
\And 
Sorathan Chaturapruek \\  Computer Science Department 
\\ Stanford University \\ sorathan@cs.stanford.edu 
\And Ashish Sabharwal  
\\ Allen Institute for AI \\ Seattle, WA \\ ashishs@allenai.org
\And
Stefano Ermon \\  Computer Science Department 
\\ Stanford University \\ ermon@cs.stanford.edu}
\begin{document}
 
\maketitle

\begin{abstract}
Many recent algorithms for approximate model counting are based on a reduction to combinatorial searches over random subsets of the space defined by parity or XOR constraints. 
Long parity constraints (involving many variables) provide strong theoretical guarantees but are computationally difficult. Short parity constraints are easier to solve but have weaker statistical properties. It is currently not known how long these parity constraints need to be. We close the gap by providing matching necessary and sufficient conditions on the required asymptotic length of the parity constraints. Further, we provide a new family of lower bounds and the first non-trivial upper bounds on the model count that are valid for arbitrarily short XORs. We empirically demonstrate the effectiveness of these bounds on model counting benchmarks and in a Satisfiability Modulo Theory (SMT) application motivated by the analysis of contingency tables in statistics.	
\end{abstract}

\section{Introduction}

\js{can probably cut first paragraph.  I think it would be better if you define your task of interest up-front.
i don't know what a contingency table / chi-squared test is, e.g. like, i'd roughly start with what's currently paragraph 3
except before that, set up the task talk about why it's hard
and how you'll solve it. citations are good here, too
you don't need to talk about probability being good / probabilistic inference being hard, because everyone at ICML already agrees with that (although it being hard for this particular task is worth discussing)}

Model counting is the problem of computing the number of distinct solutions of a given Boolean formula. It is a classical problem that has received considerable attention from a theoretical point of view~\cite{valiant1979complexity,stockmeyer1985approximation}, as well as from a practical perspective~\cite{sang2004combining,gogate2007approximate}. Numerous probabilistic inference and decision making tasks, in fact, can be directly translated to (weighted) model counting problems~\cite{richardson2006markov,gogate2011probabilistic}. As a generalization of satisfiability testing, the problem is clearly intractable in the worst case. Nevertheless, there has been considerable success in both exact and approximate model counting algorithms, motivated by a number of applications~\cite{sang2005solving}. 

Recently, approximate model counting techniques based on randomized hashing have emerged as one of the leading approaches~\cite{mbound,cav13,ermon2014low,ivrii2015computing,achlioptasstochastic,belle2015hashing}. While approximate, these techniques provide strong guarantees on the accuracy of the  results in a probabilistic sense. Further, these methods all reduce model counting to a small number of combinatorial searches on a randomly projected version of the original formula, obtained by augmenting it with randomly generated parity or XOR constraints. This approach allows one to leverage decades of research and engineering in combinatorial reasoning technology, such as fast satisfiability (SAT) and SMT solvers~\cite{biere2009handbook}. 

While modern solvers have witnessed tremendous progress over the past 25 years, model counting techniques based on hashing tend to produce instances that are difficult to solve. In order to achieve strong (probabilistic) accuracy guarantees, existing techniques require each randomly generated parity constraint to be relatively long, involving roughly half of the variables in the original problem. Such constraints, while easily solved in isolation using Gaussian Elimination, are notoriously difficult to handle when conjoined with the original formula~\cite{ghss07:shortxors,ermon2014low,ivrii2015computing,achlioptasstochastic}. Shorter parity constraints, i.e., those involving a relative few variables, are friendlier to SAT solvers, but their statistical properties are not well understood.

\citet{ermon2014low} showed that long parity constraints are not strictly necessary, and that one can obtain the \emph{same accuracy guarantees} 
using shorter XORs, which are computationally much more friendly. They provided a closed form expression, allowing an easy computation of an XOR length that suffices, given various parameters such as the number of problem variables, the number of constraints being added, and the size of the solution space under consideration. It is, however, currently not known how tight their sufficiency condition is, how it scales with various parameters, or whether it is in fact a necessary condition.

\emph{We resolve these open questions by providing an analysis of the optimal asymptotic constraint length} required for obtaining high-confidence approximations to the model count. Surprisingly, for formulas with $n$ variables, we find that when $\Theta(n)$ constraints are added, a constraint length of $\Theta(\log n)$ is both necessary and sufficient. This is a significant improvement over standard long XORs, which have length $\Theta(n)$. Constraints of logarithmic length can, for instance, be encoded efficiently with a polynomial number of clauses. 
We also study upper bounds on the minimum sufficient constraint length, which evolve from $\bigOh(\log n)$ to $\bigOh(n^\gamma \log^2 n)$ to $n/2$ across various regimes of the number of parity constraints.

As a byproduct of our analysis, we obtain a \emph{new family of probabilistic upper and lower bounds that are valid regardless of the constraint length} used. These upper and lower bounds on the model count reach within a constant factor of each other as the constraint density approaches the aforementioned optimal value. The bounds gracefully degrade as we reduce the constraint length and the corresponding computational budget. While lower bounds for arbitrary XOR lengths were previously known~\cite{ghss07:shortxors,uai13LPCount}, the upper bound we prove in this paper is the first non-trivial upper bound in this setting. Remarkably, even though we rely on random projections and therefore only look at subsets of the entire space (a \emph{local} view, akin to traditional sampling), we are able to say something about the \emph{global} nature of the space, i.e., a probabilistic upper bound on the number of solutions.

We evaluate these new bounds on standard model counting benchmarks and on a new counting application arising from the analysis of contingency tables in statistics. These data sets are common in many scientific domains, from sociological studies to ecology~\cite{sheldon2011collective}. We provide a new approach based on SMT solvers and a bit-vector arithmetic encoding. Our approach scales very well and produces accurate results on a wide range of benchmarks. It can also handle additional constraints on the tables, which are very common in scientific data analysis problems, where prior domain knowledge translates into constraints on the tables (e.g., certain entries must be zero because the corresponding event is known to be impossible). We demonstrate the capability to handle structural zeroes~\cite{chen2007conditional} in real experimental data.

\section{Preliminaries: Counting by Hashing}

Let $x_1, \cdots, x_n$ be $n$ Boolean variables. Let $S \subseteq \{0,1\}^n$ be a large, high-dimensional set\footnote{We restrict ourselves to the binary case for the ease of exposition. Our work can be naturally extended to categorical variables.}. We are interested in computing $|S|$, the number of elements in $S$, when $S$ is defined succinctly through conditions or constraints that the elements of $S$ satisfy and membership in $S$ can be tested using an NP oracle. For example, when $S$ is the set of solutions of a Boolean formula over $n$ binary variables, the problem of computing $|S|$ is known as model counting, which is the canonical $\#$-P complete problem~\cite{valiant1979complexity}.

In the past few years, there has been growing interest in approximate probabilistic algorithms for model counting. It has been shown \cite{mbound,wishicml13,cav13,achlioptasstochastic,belle2015hashing} that one can reliably estimate $|S|$, both in theory and in practice, by repeating the following simple process: randomly partition $S$ into $2^m$ cells and select one of these lower-dimensional cells, and compute whether $S$ has at least 1 element in this cell (this can be accomplished with a query to an NP oracle, e.g., invoking a SAT solver). Somewhat surprisingly, repeating this procedure a small number of times provides a constant factor approximation to $|S|$ with high probability, even though counting problems (in $\#$-P) are believed to be significantly harder than decision problems (in NP).

The correctness of the approach relies crucially on how the space is randomly partitioned into cells. 
All existing approaches partition the space into cells using parity or XOR constraints. A parity constraint defined on a subset of variables checks whether an odd or even number of the variables take the value 1. Specifically, $m$ parity (or XOR) constraints are generated, and $S$ is partitioned into $2^m$ equivalence classes based on which parity constraints are satisfied.

The way in which these constraints are generated affects the quality of the approximation of $|S|$ (the model count) obtained. Most methods \emph{randomly} generate $m$ parity constraints by adding each variable to each constraint with probability $f \leq 1/2$. This construction can also be interpreted as defining a hash function, mapping the space $\{0,1\}^n$ into $2^m$ hash bins (cells). Formally,
\begin{mydef}
\label{def:sparse-hash}
Let $A\in \{0,1\}^{m \times n}$ be a random matrix whose entries are Bernoulli i.i.d.\ random variables of parameter $f \leq 1/2$, i.e., $\Pr[A_{ij}=1] = f$. Let $b \in \{0,1\}^m$ be chosen uniformly at random, independently from $A$. Then, $\mathcal{H}^f_{m \times n}=\{h_{A,b}: \{0,1\}^n \rightarrow \{0,1\}^m\}$, where $h_{A,b}(x)= A x + b \mod{2}$ and $ h_{A,b} \in_R \mathcal{H}_{m \times n}^f$ is chosen randomly according to this process, is a family of \emph{$f$-sparse hash functions}.
\end{mydef}

The idea to estimate $|S|$ is to define progressively smaller cells (by increasing $m$, the number of parity constraints used to define $h$), until the cells become so small that no element of $S$ can be found inside a (randomly) chosen cell. The intuition is that the larger $|S|$ is, the smaller the cells will have to be, and we can use this information to estimate $|S|$.

Based on this intuition, we give a hashing-based counting procedure (Algorithm~\ref{algo_wish}, SPARSE-COUNT), which relies on an NP oracle $\mathcal{O}_S$ to check whether $S$ has an element in the cell. It is adapted from the SPARSE-WISH algorithm of \citet{ermon2014low}. The algorithm takes as input $n$ families of \emph{$f$-sparse hash functions} $\{\mathcal{H}^{f_i}_{i \times n}\}_{i=0}^n$, used to partition the space into cells. In practice, line \ref{checkwithoracle} is implemented using a SAT solver as an NP-oracle. In a model counting application, this is accomplished by adding to the original formula $i$ parity constraints generated as in Definition \ref{def:sparse-hash} and checking the satisfiability of the augmented formula. 

\begin{algorithm}[tb]
\caption{SPARSE-COUNT $(\mathcal{O}_S, \Delta, \alpha, \{\mathcal{H}^{f_i}_{i \times n}\}_{i=0}^n )$ 
}
\label{algo_wish}
\begin{algorithmic}[1]
\STATE $T \leftarrow \left\lceil \frac{\log \left(1/\Delta\right)}{\alpha} \log n \right\rceil$
\STATE $i=0$
\WHILE {$i \leq n$}
  \FOR {$t=1, \cdots, T$}
    \STATE $h^i_{A,b} \leftarrow$ hash function sampled from $\mathcal{H}^{f_i}_{i \times n}$
    \STATE Let $S(h^i_{A,b}) = |\{x \in S \mid h^i_{A,b}(x) = 0\}|$
    \STATE $w_i^t \leftarrow \mathbb{I}[S(h^i_{A,b}) \geq 1]$, invoking $\mathcal{O}_S$ \label{checkwithoracle}
  \ENDFOR
  \IF {$\mathrm{Median}(w_i^1, \cdots, w_i^T) < 1$}
    \STATE $\mathrm{break}$
  \ENDIF
  \STATE $i = i +1$
\ENDWHILE
\STATE Return $\lfloor 2^{i-1} \rfloor$
\end{algorithmic}
\end{algorithm}

Typically, $\{\mathcal{H}^{\frac{1}{2}}_{i \times n}\}$ is used, corresponding to XORs where each variable is added with probability $1/2$ (hence with average length $n/2$). We call these \textbf{long parity constraints}. In this case, it can be shown that SPARSE-COUNT will output a factor $16$ approximation of $|S|$ with probability at least $1-\Delta$~\cite{ermon2014low}. Unfortunately, checking satisfiability (i.e., $S(h^i_{A,b}) \geq 1$, line \ref{checkwithoracle}) has been observed to be very difficult when many long parity constraints are added to a formula~\cite{ghss07:shortxors,ermon2014low,ivrii2015computing,achlioptasstochastic}. Note, for instance, that while a parity constraint of length one simply freezes a variable right away, a parity constraint of length $k$ can be propagated only after $k-1$ variables have been set. From a theoretical perspective, parity constraints are known to be fundamentally difficult for the resolution proof system underlying SAT solvers (cf.\ exponential scaling of Tseitin tautologies~\cite{tseitin68}).
A natural question, therefore, is whether \textbf{short parity constraints} can be used in SPARSE-COUNT and provide reliable bounds for $|S|$.

Intuitively, for the method to work we want the hash functions $\{\mathcal{H}^{f_i}_{i \times n}\}$ to have a small collision probability. In other words, we want to ensure that when we partition the space into cells, configurations from $S$ are divided into cells evenly. This gives a direct relationship between the original number of solutions $|S|$ and the (random) number of solutions in one (randomly chosen) cell, $S(h)$. More precisely, we say that the hash family \emph{shatters} $S$ if the following holds:
\begin{mydef}
\label{def:shatter}
For $\epsilon > 0$, a family of hash functions $\mathcal{H}^{f}_{i \times n}$ $\epsilon$-\emph{shatters} a set $S$ if $\Pr[S(h) \geq 1] \geq 1/2+\epsilon$ when $h \in_R \mathcal{H}^{f}_{i \times n}$, where $S(h) = |\{x \in S \mid h(x) = 0\}|$.
\end{mydef}
The crucial property we need to obtain reliable estimates is that the hash functions (equivalently, parity constraints) are able to shatter sets $S$ with arbitrary ``shape''. This property is both sufficient and necessary for SPARSE-COUNT to provide accurate model counts with high probability:
\begin{theorem}
\label{thm:const}
(Informal statement) A necessary and sufficient condition for SPARSE-COUNT to provide a constant factor approximation to $|S|$ is that each family $\mathcal{H}^{f_i}_{i \times n}$ $\epsilon$-shatters all sets $S'$ of size $|S'|=2^{i + c}$ for some $c\geq 2$.
\end{theorem}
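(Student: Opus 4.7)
The plan is to prove sufficiency and necessity separately via the standard ``shattering controls both tails'' recipe for hashing-based counting. The whole argument revolves around the random variable $S(h^i_{A,b})$ at each level $i$ and whether $\Pr[S(h^i_{A,b}) \geq 1]$ sits above or below $\tfrac{1}{2}$: if it is safely above $\tfrac{1}{2}$ the median of $T$ trials will be $1$ and the algorithm does not stop, and if it is safely below $\tfrac{1}{2}$ the median will be $0$ and the algorithm stops.

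For sufficiency, assume the shattering property holds with some constant $c \geq 2$. I would split the indices into a low regime and a high regime. For each $i$ with $2^{i+c} \leq |S|$, pick any subset $S' \subseteq S$ of size $2^{i+c}$; shattering gives $\Pr[S'(h^i_{A,b}) \geq 1] \geq \tfrac{1}{2}+\epsilon$, and monotonicity in the set (since $S' \subseteq S$) lifts this to $\Pr[S(h^i_{A,b}) \geq 1] \geq \tfrac{1}{2}+\epsilon$. For each $i$ with $2^i \geq 2|S|$, Markov's inequality gives $\Pr[S(h^i_{A,b}) \geq 1] \leq |S|/2^i \leq \tfrac{1}{2}$. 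A Chernoff bound applied to the median of $T = \Theta((\log n)/\alpha)$ i.i.d.\ Bernoulli trials (matching the $T$ in Algorithm~\ref{algo_wish}) drives the per-level failure probability below $\Delta/(n+1)$, and a union bound over the at most $n+1$ values of $i$ shows that with probability at least $1-\Delta$ the stopping index $i^*$ lies in the narrow band $[\log_2|S|-c+1,\ \log_2|S|+2]$. The returned value $2^{i^*-1}$ is then within a factor of $2^{c+1}$ of $|S|$, which is constant since $c$ is constant.

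For necessity I would argue in the contrapositive. If for every constant $c \geq 2$ there is some level $i$ and a pathological set $S'$ of size $2^{i+c}$ that is \emph{not} $\epsilon$-shattered by $\mathcal{H}^{f_i}_{i\times n}$, then instantiating SPARSE-COUNT with $S=S'$ makes $\Pr[S'(h^i_{A,b})\geq 1] < \tfrac{1}{2}+\epsilon$, so by the same median-of-$T$ concentration the algorithm halts at some $i^* \leq i$ with non-negligible probability. The output $2^{i^*-1} \leq 2^{i-1}$ then underestimates $|S'|=2^{i+c}$ by at least a factor of $2^{c+1}$, and letting $c$ grow defeats any fixed constant-factor guarantee.

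The main obstacle is the sufficiency direction: making the per-level failure probability small enough that the union bound over the $n+1$ levels still leaves $1-\Delta$ total success requires $T = \Theta((\log n)/\alpha)$ exactly as in the algorithm, and the shattering hypothesis only yields a direct lower bound on $\Pr[S'(h^i_{A,b})\geq 1]$ for sets of size \emph{exactly} $2^{i+c}$, so one must be careful to invoke monotonicity in the set to extend it to the actual $S$ when $|S| > 2^{i+c}$. The necessity side is more conceptual but technically easier, since a single uncooperative set at a single level is enough to falsify the conclusion.
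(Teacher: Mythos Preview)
Your sufficiency argument is essentially the paper's: take a subset $S'\subseteq S$ of size exactly $2^{i+c}$, invoke shattering, lift to $S$ by monotonicity, use Markov on the other side, then Chernoff plus a union bound over the $n{+}1$ levels. One small slip: the condition ``$2^i \geq 2|S|$'' only yields $\Pr[S(h^i_{A,b})\ge 1]\le \tfrac12$, which gives nothing under Chernoff. You need a strict gap below $\tfrac12$; the paper takes $|S|\le 2^{i-c}$ with $c\ge 2$, so Markov gives $\Pr\le 2^{-c}\le \tfrac14$. Your stated band $[\log_2|S|-c+1,\ \log_2|S|+2]$ is consistent with this corrected threshold, so the fix is local.

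The necessity argument has a genuine gap. ``Not $\epsilon$-shattered'' for a \emph{fixed} $\epsilon$ only says $\Pr[S'(h^i_{A,b})\ge 1] < \tfrac12+\epsilon$; it does \emph{not} let you conclude via median-of-$T$ concentration that the algorithm halts at level $i$ with non-negligible probability. If, say, $\Pr = \tfrac12+\tfrac{\epsilon}{2}$, Chernoff actually pushes the median to $1$ and the algorithm sails past $i$. The formal necessity hypothesis (see the paper's formal restatement in the appendix) is that $S'$ is not $\epsilon$-shattered for \emph{every} $\epsilon>0$, i.e.\ $\Pr[S'(h^i_{A,b})\ge 1]\le \tfrac12$. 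From this one argues directly (no Chernoff): conditioned on reaching level $i$, each indicator $w_i^t$ is $0$ with probability $\ge \tfrac12$, hence the median is $0$ with probability $\ge \tfrac12$, so the loop breaks at or before $i$ with probability $\ge \tfrac12$, giving output $\le 2^{i-1}$ versus $|S'|=2^{i+c}$. A single bad set at a single level with a fixed $c$ already caps the success probability at $\tfrac12$ for any $T$ (hence any $\alpha,\Delta$); there is no need to ``let $c$ grow''.
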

A formal statement, along with all proofs, is provided in a companion technical report~\cite{zhaotech}.

Long parity constraints, i.e., $1/2$-sparse hash functions, are capable of shattering sets of arbitrary shape. When $h \in_R \mathcal{H}^{\frac{1}{2}}_{i \times n}$, it can be shown that configurations $x\in \{0,1\}^n$ are placed into hash bins (cells) pairwise independently, and this guarantees shattering of sufficiently large sets of arbitrary shape.   Recently, \citet{ermon2014low} showed that sparser hash functions can be used for approximate counting as well. In particular, $f^*$-sparse hash functions, for sufficiently large $f^* \lneqq 1/2$, were shown to have good enough shattering capabilities.
It is currently not known whether $f^*$ is the optimal constraint density.

\section{Asymptotically Optimal Constraint Density}

We analyze the asymptotic behavior of the minimum constraint density $f$ needed for SPARSE-COUNT to produce correct bounds with high confidence. As noted earlier, the bottleneck lies in ensuring that $f$ is large enough for a randomly chosen hash bin to receive at least one element of the set $S$ under consideration, i.e., the hash family shatters $S$.

\begin{mydef}
Let $n, m \in \mathbb{N}, n \geq m$. For any fixed $\epsilon > 0$, the \emph{minimum constraint density} $f = \minf_\epsilon(m,n)$ is defined as the pointwise smallest function such that for any constant $c \geq 2$, $\mathcal{H}^f_{m \times n}$ $\epsilon$-shatters all sets $S \in \{0,1\}^n$ of size $2^{m+c}$.
\end{mydef}

We will show (Theorem~\ref{thm:asymp}) that for any $\epsilon > 0$, $\minf_\epsilon(m,n) = \Omega(\frac{\log m}{m})$, and this is asymptotically tight when $\epsilon$ is small enough and $m = \Theta(n)$, which in practice is often the computationally hardest regime of $m$. Further, for the regime of $m = \Theta(n^\beta)$ for $\beta < 1$, we show that $\minf_\epsilon(m,n) = \bigOh(\frac{\log^2 m}{m})$. Combined with the observation that $\minf_\epsilon(m,n) = \Theta(1)$ when $m = \Theta(1)$, our results thus reveal how the minimum constraint density evolves from a constant to $\Theta(\frac{\log m}{m})$ as $m$ increases from a constant to being linearly related to $n$.

The \emph{minimum average constraint length}, $n \cdot \minf_\epsilon(m,n)$, correspondingly decreases from $n/2$ to $\bigOh(n^{1-\beta} \log^2 n)$ to $\Theta(\log n)$, showing that in the computationally hardest regime of $m = \Theta(n)$, the parity constraints can in fact be represented using only $2^{\Theta(\log n)}$, i.e., a polynomial number of SAT clauses. 


\begin{theorem}
\label{thm:asymp}
Let $n, m \in \mathbb{N}, n \geq m,$ and $\kappa > 1$. The minimum constraint density, $\minf_\epsilon(m,n)$, behaves as follows:
\begin{enumerate}

\item Let $\epsilon > 0$. There exists $M_\kappa$ such that for all $m \geq M_\kappa$:
\[ \minf_\epsilon(m,n) > \frac{\log{m}}{\kappa\, m}\]

\item Let $\epsilon \in (0,\frac{3}{10}), \alpha \in (0,1),$ and $m = \alpha n$. There exists $N$ such that for all $n \geq N$:
\[ \minf_\epsilon(m,n) \leq \left(\kbound\right) \frac{\log{m}}{m} \]

\item Let $\epsilon \in (0,\frac{3}{10}), \alpha, \beta \in (0,1),$ and $m = \alpha n^\beta$. There exists $N_\kappa$ such that for all $n \geq N_\kappa$:
\[ \minf_\epsilon(m,n) \leq \frac{\kappa\, (1-\beta)}{2\beta}\frac{\log^2{m}}{m} \]

\end{enumerate}
\end{theorem}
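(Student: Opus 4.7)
The plan is to handle the three parts separately. Part~1, the lower bound, is proved by exhibiting one hard witness set $S$, while Parts~2 and~3 both reduce to a second-moment / Paley--Zygmund estimate for $S(h)$; those two regimes differ only in how the binomial tails are balanced.

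For Part~1, I would take $S$ to be an axis-aligned subcube: $S = \{0\}^{n-k} \times \{0,1\}^k$ with $k = m+c$, so $|S| = 2^{m+c}$. The hash $h_{A,b}(x) = Ax+b \bmod 2$ restricted to $S$ depends only on the rightmost $k$ columns of $A$; call this submatrix $A' \in \{0,1\}^{m \times k}$. Whenever a row of $A'$ is the all-zero vector, the corresponding coordinate of $h$ is constant on $S$ and equals the matching entry of $b$; with probability $\half$ this constant is $1$, in which case no $x \in S$ is hashed to $0$. The probability that a given row of $A'$ is zero is $(1-f)^k$, and setting $f = (\log m)/(\kappa m)$ yields $(1-f)^k \sim m^{-1/\kappa}$, so the expected number of zero rows grows as $m^{1-1/\kappa} \to \infty$ whenever $\kappa > 1$. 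A Poisson/second-moment argument then shows at least one zero row exists with probability tending to $1$; conditioned on having $Z \geq 1$ zero rows, the bad $b$-event occurs with probability $1 - 2^{-Z} \to 1$, yielding $\Pr[S(h) \geq 1] \leq \half + o(1)$ and contradicting $\epsilon$-shattering once $m \geq M_\kappa$.

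For Parts~2 and~3, let $Y = S(h)$ for an arbitrary $S$ of size $2^{m+c}$. Paley--Zygmund gives $\Pr[Y \geq 1] \geq \Exp[Y]^2/\Exp[Y^2]$, and $\Exp[Y] = 2^c$ since $b$ is uniform. Row-independence of $A$ yields $\Pr[h(x) = h(y) = 0] = 2^{-m}\phi(d)^m$ for $x, y$ at Hamming distance $d$, with $\phi(d) = (1+(1-2f)^d)/2$. Using $|S \cap (S \oplus z)| \leq |S|$ to strip away the geometry of $S$, the requirement $\Exp[Y^2]/\Exp[Y]^2 \leq 1/(\half+\epsilon)$ reduces to bounding $\sum_{d \geq 1} \binom{n}{d}\phi(d)^m$ by a constant that depends only on $c$ and $\epsilon$. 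I split this sum at a crossover distance $d^\ast$: for $d \leq d^\ast$ I combine the tight bound $\phi(d)^m \leq e^{-fdm}$ with $\binom{n}{d} \leq (en/d)^d$; for $d > d^\ast$ I use $\phi(d)^m \leq \phi(d^\ast)^m$ together with a Chernoff/entropy bound on $\binom{n}{d}$ that supplies a factor $2^{-\Omega(m)}$.

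In Part~2 ($m = \alpha n$), the $d$-th inner-term becomes $(e/(\alpha d))^d \, m^{(1-c_0)d}$ once we set $f = c_0 (\log m)/m$, so convergence of the geometric series in $d$ requires $c_0$ to exceed a threshold growing affinely with $\log_2(1/\alpha)$, producing the expression $\kbound$ after optimizing $d^\ast$ and the Chernoff constants. In Part~3 ($m = \alpha n^\beta$), the identity $n^d = (m/\alpha)^{d/\beta}$ forces $c_0 > 1/\beta$; because $n$ is now super-polynomial in $m$, the natural crossover shifts to $d^\ast = \Theta(\log m)$, which inflates $f$ by an extra $\log m$ factor and produces the $\log^2 m/m$ rate with constant $\kappa(1-\beta)/(2\beta)$. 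The main obstacle I anticipate is that the shift-intersection bound $|S \cap (S \oplus z)| \leq |S|$ is tight precisely for the subcube used in Part~1, so matching the lower-bound rate at all requires the split $d^\ast$ to be optimized carefully; pinning down the specific constants $3.6$ and $5/4$ in Part~2 is essentially a numerical optimization of $d^\ast$ against the Chernoff tails of $\binom{n}{d}$, which I would defer to the full technical report.
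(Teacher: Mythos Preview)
Your Part~1 is essentially the paper's argument: the same subcube witness, the same ``all-zero row with $b_i=1$'' failure mode, and the same asymptotic $(1-f)^{m+c}\sim m^{-1/\kappa}$. The paper just packages it more directly as $\Pr[S_c(h)>0]\le\bigl(1-\tfrac12(1-f)^{m+c}\bigr)^m$ and shows this drops below $1/2$; your Poisson detour is unnecessary but harmless.

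Parts~2 and~3, however, contain a genuine gap. After the shift-intersection bound $|S\cap(S\oplus z)|\le|S|$ you claim the task reduces to showing
\[
\sum_{d\ge 1}\binom{n}{d}\,\phi(d)^m \ = \ O(1),\qquad \phi(d)=\tfrac12\bigl(1+(1-2f)^d\bigr).
\]
This sum is \emph{never} bounded in the regimes you consider. Since $\phi(d)\ge\tfrac12$ for every $d$, one has $\sum_{d}\binom{n}{d}\phi(d)^m\ge 2^{n-m}$, which for $m=\alpha n$ is $2^{(1-\alpha)n}\to\infty$ and for $m=\alpha n^\beta$ is even larger. Your proposed tail estimate---``$\phi(d)^m\le\phi(d^\ast)^m$ together with a Chernoff/entropy bound on $\binom{n}{d}$ that supplies a factor $2^{-\Omega(m)}$''---cannot work: the entropy bound is an \emph{upper} bound $\binom{n}{d}\le 2^{nH(d/n)}$, not a decay, and for $d$ near $n/2$ those coefficients are of order $2^n$ while $\phi(d)^m$ bottoms out at $2^{-m}$.

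The missing idea is that the crude bound $|S\cap(S\oplus z)|\le|S|$ discards the cardinality constraint $|S|=2^{m+c}$. The paper (following Ermon et al.) instead uses the worst-case packing bound: for each $x\in S$ the $|S|-1$ other points can occupy at most the Hamming shells up to radius $w^*(n,q)=\max\{w:\sum_{j\le w}\binom{n}{j}\le q-1\}$, so the variance is controlled by $\sum_{w\le w^*}\binom{n}{w}\phi(w)^m$ plus a small remainder. The crucial gain is that the tail $\sum_{w>d^\ast}^{w^*}\binom{n}{w}$ is now at most $q-1=2^{m+c}-1$ rather than $2^n$, which is exactly what converts your divergent tail into the bounded term $B_n\approx 2^c$ in the paper's Lemmas. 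Your split at $d^\ast$ and the head analysis via $\phi(d)^m\le e^{-fdm}$ are in the right spirit and mirror the paper's $A_n$ term; it is the tail where you need the $w^*$ truncation, not a Chernoff bound on $\binom{n}{d}$.
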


The lower bound in Theorem~\ref{thm:asymp} follows from analyzing the shattering probability of an $m+c$ dimensional hypercube $S_c=\{x \mid x_j = 0 \ \ \forall j > m+c\}$. Intuitively, random parity constraints of density smaller than $\frac{\log m}{m}$ do not even touch the $m+c$ relevant (i.e., non-fixed) dimensions of $S_c$ with a high enough probability, and thus cannot shatter $S_c$ (because all elements of $S_c$ would be mapped to the same hash bin).

For the upper bounds, we exploit the fact that $\minf(m,n)$ is at most the $f^*$ function introduced by \namecite{ermon2014low} and provide an upper bound on the latter. Intuitively, $f^*$ was defined as the minimum function such that the variance of $S(h)$ is relatively small. The variance was upper bounded by considering the worst case ``shape'' of $|S|$: points packed together unrealistically tightly, all fitting together within Hamming distance $w^*$ of a point. For the case of $m = \alpha n$, we observe that $w^*$ must grow as $\Theta(n)$, and divide the expression bounding the variance into two parts: terms corresponding to points that are relatively close (within distance $\lambda n$ for a particular $\lambda$) are shown to contribute a vanishingly small amount to the variance, while terms corresponding to points that are farther apart are shown to behave as if they contribute to $S(h)$ in a pairwise independent fashion. The $\frac{\log m}{m}$ bound is somewhat natural and also arises in the analysis of the rank of sparse random matrices and random sparse linear systems~\cite{kolchin1999random}. For example, this threshold governs the asymptotic probability that a matrix $A$ generated as in Definition \ref{def:sparse-hash} has full rank~\cite{cooper2000rank}. The connection arises because, in our setting, the rank of the matrix $A$ affects the quality of hashing. For example, an all-zero matrix $A$ (of rank $0$) would map all points to the same hash bucket.  


\section{Improved Bounds on the Model Count}

In the previous sections, we established the optimal (smallest) constraint density that provides a constant factor approximation on the model count $|S|$. However, depending on the size and structure of $S$, even adding constraints of density $f^\ast\ll 0.5$ can lead to instances that cannot be solved by modern SAT solvers (see Table \ref{table:SATCount} below). 

In this section we show that for $f < f^\ast$ we can still obtain probabilistic upper and lower bounds. The bounds constitute a trade off between small $f$ for easily solved NP queries and $f$ close to $f^*$ for guaranteed constant factor approximation.

To facilitate discussion, we define $S(h) = |\{x \in S \mid h(x) = 0\}| = |S \cap h^{-1}(0)|$ to be the random variable indicating how many elements of $S$ \emph{survive} $h$, when $h$ is randomly chosen from $\mathcal{H}^f_{m \times n}$ as in Definition \ref{def:sparse-hash}.
Let $\mu_S = \Exp[S(h)]$ and $\sigma^2_S = \mathrm{Var}[S(h)]$. Then, it is easy to verify that irrespective of the value of $f$, $\mu_S = |S|2^{-m}$. $\mathrm{Var}[S(h)]$ and $\Pr[S(h) \geq 1]$, however, do depend on $f$.

\subsection{Tighter Lower Bound on $|S|$}

Our lower bound is based on Markov's inequality and the fact that the mean of $S(h)$, $\mu_S = |S|2^{-m}$, has a simple linear relationship to $|S|$. 
Previous probabilistic lower bounds~\cite{ghss07:shortxors,uai13LPCount} are based on the following observation: it is very unlikely for at least half of $T$ repetitions of applying $h$ to $S$ to result in some element of $S$ surviving unless there are at least $2^{m-2}$ solutions in $S$. 
Otherwise, $\mu_S$ would be too small (specifically, $\leq 1/4$), making it unlikely for solutions to survive often.

Unlike previous methods, we not only check whether the estimated $\Pr[S(h) \geq 1]$ is at least $1/2$, but also consider an empirical estimate of $\Pr[S(h) \geq 1]$. This results in a tighter lower bound, with a probabilistic correctness guarantee derived using Chernoff's bound.

\begin{lemma}
\label{lemma:lb_ideal}
Let $S \subseteq \{0,1\}^n, f \in [0,1/2],$ and for each $m \in \{1, 2, \ldots, n\}$, let $h_m \in_R \mathcal{H}^f_{m \times n}$. Then,
\begin{align}
\left|S\right| & \geq \max_{m=1}^n \ 2^m \Pr[S(h_m) \geq 1].
\end{align}
\end{lemma}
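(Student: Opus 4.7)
The plan is to observe that this is essentially a one-line consequence of Markov's inequality, once we establish that $\mu_S = |S| 2^{-m}$ holds regardless of the sparsity parameter $f$. So the first step is to compute $\Exp[S(h_m)]$ carefully. By linearity of expectation, $\Exp[S(h_m)] = \sum_{x \in S} \Pr[h_m(x) = 0]$, so it suffices to show that $\Pr[h_{A,b}(x) = 0] = 2^{-m}$ for every fixed $x$. This is where the independent uniform offset $b$ does the work: conditioning on $A$, the event $Ax + b \equiv 0 \pmod 2$ is the event $b = -Ax \pmod 2$, which has probability exactly $2^{-m}$ since $b$ is uniform on $\{0,1\}^m$ and independent of $A$. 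Averaging over $A$ preserves this, so $\Pr[h_m(x) = 0] = 2^{-m}$ for all $x$, and thus $\mu_S = |S| \cdot 2^{-m}$.

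Next, I would apply Markov's inequality to the nonnegative integer-valued random variable $S(h_m)$. Since $\{S(h_m) \geq 1\}$ coincides with $\{S(h_m) \geq 1\}$ and $S(h_m)$ is integer-valued, Markov's inequality gives
\begin{equation*}
\Pr[S(h_m) \geq 1] \;\leq\; \Exp[S(h_m)] \;=\; |S| \cdot 2^{-m}.
\end{equation*}
Rearranging,
\begin{equation*}
|S| \;\geq\; 2^m \, \Pr[S(h_m) \geq 1].
\end{equation*}
Since this inequality holds for every individual $m \in \{1,\ldots,n\}$, it also holds when we take the maximum of the right-hand side over $m$, yielding the claimed bound.

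There is no real obstacle: the only thing to be careful about is the first step, where one might worry that sparse hash functions (with $f < 1/2$) fail to be uniform at the single-point level. The randomization by $b$ rescues this and gives marginal uniformity for free, independently of $f$ and of the structure of $A$. Pairwise (or higher-order) independence properties, which $f$-sparse hash families generally lack for small $f$, are \emph{not} needed here precisely because we only use a first-moment (Markov) argument rather than a second-moment concentration argument. This is what makes the lower bound valid for arbitrarily short XORs.
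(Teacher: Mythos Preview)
Your proposal is correct and follows essentially the same approach as the paper: compute $\Exp[S(h_m)] = |S|\,2^{-m}$ (using the uniform offset $b$, which the paper states earlier in the text), apply Markov's inequality to the nonnegative integer-valued $S(h_m)$, rearrange, and take the maximum over $m$. Your write-up is in fact more explicit than the paper's in justifying why $\mu_S = |S|\,2^{-m}$ holds for arbitrary $f$; one trivial typo to fix is the tautological ``$\{S(h_m)\geq 1\}$ coincides with $\{S(h_m)\geq 1\}$'' --- you presumably meant that $\{S(h_m)>0\}=\{S(h_m)\geq 1\}$ by integrality.
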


Our theoretical lower bound is $L=2^{m}\mathbb \Pr\left[S(h)\geq1\right]$, which satisfies $|S| \geq L$ by the previous Lemma. 
In practice, we cannot compute $\Pr[S(h)\geq 1]$ exactly.
so our practical lower bound $\hat{L}$ will be based on an empirical estimate
$\Pr_\text{est}\left[S(h)\geq1\right]$ derived from samples. Because $\hat{L}$ is a random variable,
we would like to have a statement of the form $\Pr[\left|S\right|\geq \hat{L}] \geq 1-\delta$, where the probability is with respect to $\hat{L}$. This is formalized by the following Theorem.

\begin{theorem}
\label{thm:tighter-LB}
Let $m>0, T>0,$ and $h_{m}^1, \cdots, h_{m}^T$ be hash functions sampled independently from $\mathcal{H}^f_{m \times n}$. Let $Y_{k}=\mathbb I\left[S(h_{m}^k)\geq 1\right], Y = \sum_{k=1}^{T}Y_{k},$ and $\Pr_{\text{est}}\left[S(h)\geq1\right] = Y/T$ be random variables.
Let $\kappa > 0$, $c>0$. 
Define a random variable $\mathcal{B} = \mathcal{B}(S,h_{m}^1, \cdots, h_{m}^T)$ as $\frac{2^{m} c}{(1+\kappa)}$ if
$
\Pr_{\text{est}}\left[S(h)\geq1\right] \geq c
$
and $0$ otherwise. Then
\begin{align*}
\Pr\left[|S|\geq\mathcal{B}\right] \geq 1 - \exp\left(-\frac{\kappa^2 cT}{(1+\kappa)(2+\kappa)} \right).
\end{align*}
\end{theorem}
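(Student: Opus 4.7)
\textbf{Proof plan for Theorem \ref{thm:tighter-LB}.}
The plan is to reduce the statement to a one-sided multiplicative Chernoff bound on the Binomial random variable $Y$, using Lemma~\ref{lemma:lb_ideal} to relate $|S|$ to the single-trial success probability $p := \Pr[S(h_m)\geq 1]$, and then using a monotonicity argument to replace the ``effective'' deviation parameter that naturally arises by the desired $\kappa$.

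First, I would do a case split on the deterministic (but unknown) quantity $|S|$. If $|S| \geq 2^m c / (1+\kappa)$ then $|S| \geq \mathcal{B}$ holds with probability $1$, since $\mathcal{B}$ is, by definition, either $0$ or $2^m c/(1+\kappa)$. So the interesting case is $|S| < 2^m c/(1+\kappa)$. By Lemma~\ref{lemma:lb_ideal} applied at the chosen $m$, we have $p \leq |S|/2^m$, hence $p < c/(1+\kappa)$, i.e.\ the mean $\nu := \mathbb{E}[Y] = T p$ satisfies $\nu < cT/(1+\kappa)$. In this case the bad event $\{|S| < \mathcal{B}\}$ is exactly $\{\mathcal{B} > 0\} = \{Y \geq cT\}$, so it suffices to upper bound $\Pr[Y \geq cT]$.

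Next, I would introduce the effective deviation $\kappa'$ defined by $1+\kappa' = cT/\nu$ (if $p = 0$, then $Y \equiv 0$ and the bad event has probability $0$, so we may assume $p > 0$). By construction $\kappa' > \kappa$ and $\{Y \geq cT\} = \{Y \geq (1+\kappa')\nu\}$. Now the standard multiplicative Chernoff bound for independent Bernoullis gives
\begin{equation*}
\Pr[Y \geq (1+\kappa')\nu] \;\leq\; \exp\!\left(-\frac{{\kappa'}^{2}\, \nu}{2+\kappa'}\right)
\;=\; \exp\!\left(-\frac{{\kappa'}^{2}\, cT}{(1+\kappa')(2+\kappa')}\right),
\end{equation*}
where I substituted $\nu = cT/(1+\kappa')$.

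Finally, I would finish by observing that the function $g(x) = x^{2}/\bigl((1+x)(2+x)\bigr)$ is strictly increasing on $(0,\infty)$ (its derivative has an easy sign check), so $\kappa' \geq \kappa$ yields $g(\kappa') \geq g(\kappa)$, and therefore the exponent above is at most $-\kappa^{2} cT / \bigl((1+\kappa)(2+\kappa)\bigr)$. Combining the two cases gives the stated bound.

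The main obstacle I anticipate is a conceptual rather than technical one: keeping the roles of the deterministic quantity $|S|$ and the random quantities $\mathcal{B}$, $Y$, $\Pr_{\text{est}}$ straight, so that the case split and the reduction from ``$Y \geq cT$'' to a genuine multiplicative deviation $(1+\kappa')\nu$ are valid. Beyond that, the computation is a routine application of multiplicative Chernoff together with the monotonicity of $g$.
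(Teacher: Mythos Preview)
Your proof is correct and follows essentially the same structure as the paper's: a case split (you split on $|S|$, the paper equivalently splits on $\mathbb{E}[Y]=Tp$, the two being contrapositives via Lemma~\ref{lemma:lb_ideal}), then a multiplicative Chernoff bound in the nontrivial case. The only minor technical difference is in absorbing the slack between $p$ and $c/(1+\kappa)$: the paper introduces an auxiliary i.i.d.\ Bernoulli sum $Z$ with parameter exactly $c/(1+\kappa)$, uses stochastic dominance $\Pr[Y\geq cT]\leq\Pr[Z\geq cT]$, and applies Chernoff to $Z$ at deviation $\kappa$, whereas you apply Chernoff directly to $Y$ at the effective deviation $\kappa'>\kappa$ and then invoke the monotonicity of $x\mapsto x^2/\bigl((1+x)(2+x)\bigr)$---both routes yield the identical final bound.
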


\subsection{New Upper Bound for $|S|$}

The upper bound expression for $f$ that we derive next is based on the contrapositive of the observation of \namecite{ermon2014low} that the larger $|S|$ is, the smaller an $f$ suffices to shatter it.

For $n, m, f,$ and $\epsilon(n,m,q,f)$ from \cite{ermon2014low}, define:
\begin{align}
\label{def:vq}
v(q)=\frac{q}{2^{m}}\left(1+\epsilon(n, m, q, f)\cdot(q-1)-\frac{q}{2^{m}}\right)
\end{align}
This quantity is an upper bound on the variance $\mathrm{Var}[S(h)]$ of the number of surviving solutions as a function of the size of the set $q$, the number of constraints used $m$, and the statistical quality of the hash functions, which is controlled by the constraint density $f$. The following Lemma characterizes the asymptotic behavior of this upper bound on the variance:
\begin{lemma}
\label{lemma:inc}
$q^2/v(q)$ is an increasing function of $q$.
\end{lemma}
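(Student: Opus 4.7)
The plan is to prove the monotonicity by direct differentiation of the ratio. Abbreviating $\epsilon = \epsilon(n,m,q,f)$ for fixed $n,m,f$, I first rewrite
$$\frac{q^2}{v(q)} \;=\; \frac{q \cdot 2^m}{1 + \epsilon(q-1) - q/2^m}.$$
The first step is to recall from \citet{ermon2014low} the concrete form of $\epsilon$, namely (an upper bound on) the worst‑case average pairwise collision probability $\Pr[h(x)=h(y)=0]\cdot 2^m$ taken over all sets $S\subseteq\{0,1\}^n$ of size $q$. Two properties will be needed: (i) $\epsilon < 1$ whenever $f>0$ and $q\ge 2$, since distinct points in the Boolean cube sit at Hamming distance $\ge 1$ and a single parity constraint collides them with probability $< 1$; and (ii) $\epsilon(n,m,q,f)$ is non‑increasing in $q$, because the worst case is realized (essentially) by a Hamming ball of volume $q$, and growing $q$ forces the inclusion of pairs at strictly larger Hamming distances, which lowers the average.

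Next I differentiate. Setting $D(q) = 1 + \epsilon(q)(q-1) - q/2^m$, the quotient rule gives
$$\frac{d}{dq}\!\left(\frac{q\cdot 2^m}{D(q)}\right) \;=\; \frac{2^m\bigl(D(q)-q\,D'(q)\bigr)}{D(q)^2},$$
and a straightforward expansion of $D-qD'$ collapses the cross terms to yield
$$D(q) - q\,D'(q) \;=\; 1 - \epsilon(q) - q(q-1)\,\epsilon'(q).$$
The denominator $D(q)^2$ is positive, so it remains to check the bracket is positive. Combining (i) and (ii) above gives $1 - \epsilon(q) - q(q-1)\epsilon'(q) \ge 1 - \epsilon(q) > 0$, establishing strict monotonicity. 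For integer $q$, the same identity can be phrased as a finite difference $q^2/v(q) - (q-1)^2/v(q-1)$ that factors analogously through the same two properties of $\epsilon$.

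The main obstacle I expect is verifying property (ii), the monotonicity of $\epsilon(n,m,q,f)$ in $q$. This is a combinatorial statement about how the most tightly packed set of $q$ binary strings dilutes as $q$ grows. It should reduce to a comparison of the average pairwise collision weight $\bigl(\tfrac{1+(1-2f)^d}{2}\bigr)^m$ over Hamming balls of different radii — the same worst‑case structure used in the authors' analysis of $f^*$ in Theorem~\ref{thm:asymp} and in the companion technical report — and should follow from the fact that adjoining the next Hamming shell only adds pairs at strictly larger distances than the diameter of the previous ball, so the average can only decrease. If the definition of $\epsilon$ in \citet{ermon2014low} is pointwise (e.g.\ the collision probability at distance $1$ or at the worst single distance), then $\epsilon$ is in fact independent of $q$, $\epsilon'\equiv 0$, and the numerator collapses directly to $1-\epsilon > 0$, making the argument even simpler.
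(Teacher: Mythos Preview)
Your approach is correct and is organized differently from the paper's. The paper works directly with the explicit formula for $\epsilon(n,m,q,f)$ from Definition~\ref{def:wstar-etc}: it rewrites $q^2/v(q)$ as a positive constant times $g(q)=q/t(q)$ with $t(q)=2^m+\sum_{w=1}^{w^*}\binom{n}{w}(1+x^w)^m+r(1+x^{w^*+1})^m-q$ (where $x=1-2f$), and then proves $g(q)<g(q+1)$ by a two-case computation depending on whether $w^*(n,q+1)=w^*(n,q)$ or $w^*(n,q)+1$. In both cases the inequality collapses to $2^m>(1+x^{w^*+1})^m$ together with $(1+x^w)^m\geq(1+x^{w^*+1})^m$ for $w\leq w^*$, which hold since $0\le x<1$.

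You instead isolate two structural properties of $\epsilon$ --- that $\epsilon<1$ and that $\epsilon$ is non-increasing in $q$ --- and show monotonicity follows from these alone. The discrete version of your derivative identity is $(q{+}1)D(q)-qD(q{+}1)=1-\epsilon_q+q^2(\epsilon_q-\epsilon_{q+1})$, which is positive by (i) and (ii), so the argument does carry over cleanly to integer $q$. The piece you still have to \emph{prove} is (ii). Your sketch of it is slightly miscast (the formula defining $\epsilon$ is an average over Hamming distances from a single center, not over all pairs in a ball), but the substance is right: passing from $q$ to $q{+}1$ adds to the running sum a single term $(1+x^{w^*+1})^m/2^m$ --- in both the ``$w^*$ stays'' and ``$w^*$ jumps'' cases --- and this term is no larger than any term already present, so the average cannot increase. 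That one-line observation is precisely what the paper's case split establishes, so the two arguments rest on the same combinatorial fact; yours packages it more modularly, the paper's more directly.
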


Using Lemma \ref{lemma:inc}, we are ready to obtain an upper bound on the size of the set $S$ in terms of the probability $\Pr[S(h) \geq 1]$ that at least one configuration from $S$ survives after adding the randomly generated constraints. 
\begin{lemma}
\label{upperb:lemma}
Let $S \subseteq \{0,1\}^n$ and $h \in_R \mathcal{H}^f_{m \times n}$.
 Then 
\[
|S| \leq \min \left\{z \bigm \vert \frac{1}{1 + 2^{2m} v(z) / z^2}  > \Pr[S(h) \geq 1]\right\}
\]
\end{lemma}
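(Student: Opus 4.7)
The plan is to apply the second moment method (Paley-Zygmund inequality) to the random variable $S(h)$ and then use the monotonicity supplied by Lemma~\ref{lemma:inc} to turn the resulting inequality into an upper bound on $|S|$.

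First, I would recall that since $S(h)$ is a non-negative integer-valued random variable, the Paley-Zygmund / second moment inequality gives
\begin{equation*}
\Pr[S(h) \geq 1] \;\geq\; \frac{\mu_S^2}{\mu_S^2 + \sigma_S^2} \;=\; \frac{1}{1 + \sigma_S^2/\mu_S^2}.
\end{equation*}
We have $\mu_S = |S|\,2^{-m}$ exactly (as noted just before the lemma). The key external ingredient is the variance upper bound $\sigma_S^2 = \mathrm{Var}[S(h)] \leq v(|S|)$, which is exactly the quantity $v(q)$ defined in (\ref{def:vq}) evaluated at $q = |S|$; this bound is the one derived by \citet{ermon2014low} and is the motivation for introducing $v$ in the first place. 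Substituting and simplifying $\sigma_S^2/\mu_S^2 \leq 2^{2m} v(|S|)/|S|^2$ yields
\begin{equation*}
\Pr[S(h) \geq 1] \;\geq\; \frac{1}{1 + 2^{2m}\, v(|S|)/|S|^2}.
\end{equation*}

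Next I would invoke Lemma~\ref{lemma:inc}: since $q^2/v(q)$ is increasing in $q$, the ratio $v(q)/q^2$ is decreasing, and therefore the function $g(z) := 1/(1 + 2^{2m} v(z)/z^2)$ is non-decreasing in $z$. Consequently, for every $z \leq |S|$,
\begin{equation*}
g(z) \;\leq\; g(|S|) \;\leq\; \Pr[S(h) \geq 1].
\end{equation*}
This means no $z \leq |S|$ can satisfy the strict inequality $g(z) > \Pr[S(h) \geq 1]$, so every element of the set $\{z \mid g(z) > \Pr[S(h) \geq 1]\}$ must exceed $|S|$. Taking the minimum over that set gives exactly $|S| \leq \min\{z \mid g(z) > \Pr[S(h) \geq 1]\}$, which is the claimed bound.

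The proof is essentially a contrapositive packaging: a large probability of survival forces $|S|$ to be large, and monotonicity lets us read the threshold off the expression $g(z)$. The only step requiring care is invoking the variance bound $\mathrm{Var}[S(h)] \leq v(|S|)$ from the prior work, which depends on the explicit form of $\epsilon(n,m,q,f)$. If that bound is not stated in the desired direction in \cite{ermon2014low}, a brief sub-derivation of the second-moment calculation for $f$-sparse hash families would be needed; otherwise the argument is a short concatenation of Paley-Zygmund with Lemma~\ref{lemma:inc}, and I do not anticipate a significant obstacle.
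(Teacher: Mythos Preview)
Your proposal is correct and matches the paper's proof essentially step for step: the paper invokes Cantelli's inequality (which coincides with Paley--Zygmund here since $S(h)\geq 0$), the variance bound $\sigma_S^2 \leq v(|S|)$ from \citet{ermon2014low}, and the monotonicity from Lemma~\ref{lemma:inc}, then finishes by contradiction rather than your direct contrapositive phrasing. The only cosmetic difference is that the paper writes the threshold as $1 - v(z)/\bigl(v(z)+(z/2^m)^2\bigr)$, which is algebraically identical to your $g(z)=1/(1+2^{2m}v(z)/z^2)$.
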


The probability $\Pr[S(h) \geq 1]$ is unknown, but can be estimated from samples. In particular, we can draw independent samples of the hash functions and get accurate estimates using Chernoff style bounds. This yields the following theorem:

\begin{theorem}
\label{thm:UB}
Let $S \subseteq \{0,1\}^n$. Let $\Delta \in (0,1)$. Suppose we draw $T \geq 24 \ln \frac{1}{\Delta}$ 
hash functions $h_1, \cdots, h_T$ randomly from $\mathcal{H}^f_{m \times n}$. 
Let
\[
U(n,m,f) = \min \left\{z \bigm \vert \frac{1}{1 + 2^{2m} v(z) / z^2} \geq \frac{3}{4}\right\}
\]
Let $\mathcal{A}(S,h_1, \cdots, h_T)$ be a random variable that equals $U(n,m,f)$ if
$
\mathrm{Median}(\I[S(h_1)=0], \cdots, \I[S(h_T)=0])=1
$, and $2^n$ otherwise. Then
\begin{equation}
\label{eqn:ub}
\Pr\left[|S| \leq \mathcal{A}(S,h_1, \cdots, h_T)\right] \geq 1-\Delta
\end{equation}
\end{theorem}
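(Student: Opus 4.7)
The plan is to case-split on whether $|S| \leq U(n,m,f)$. If $|S| \leq U(n,m,f)$, then $\mathcal{A}(S,h_1,\ldots,h_T)$ is always either $U(n,m,f) \geq |S|$ or $2^n \geq |S|$, so the event $\{|S| \leq \mathcal{A}\}$ holds with probability $1$ and there is nothing to prove. The remaining work is the case $|S| > U(n,m,f)$, where the failure event $\{|S| > \mathcal{A}\}$ coincides exactly with the event that $\mathcal{A}$ takes the value $U(n,m,f)$ rather than $2^n$, i.e., that $\mathrm{Median}(\mathbb{I}[S(h_1)=0],\ldots,\mathbb{I}[S(h_T)=0]) = 1$.

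The first step in this case is to deduce from $|S| > U(n,m,f)$ that $\Pr[S(h) \geq 1] \geq 3/4$. Write $g(z) = 1/(1 + 2^{2m} v(z)/z^2)$; by Lemma \ref{lemma:inc}, $z^2/v(z)$ is increasing in $z$, so $g$ is monotone increasing. Assume for contradiction that $\Pr[S(h)\geq 1] < 3/4$. By the definition of $U(n,m,f)$ we have $g(U(n,m,f)) \geq 3/4 > \Pr[S(h)\geq 1]$, so $U(n,m,f)$ lies in the set $\{z : g(z) > \Pr[S(h)\geq 1]\}$. Lemma \ref{upperb:lemma} then yields $|S| \leq U(n,m,f)$, contradicting the case assumption. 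Hence $\Pr[S(h_i) = 0] \leq 1/4$.

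Now let $X_i = \mathbb{I}[S(h_i) = 0]$ and $X = \sum_{i=1}^T X_i$. The $X_i$ are i.i.d.\ Bernoulli with parameter at most $1/4$, so $X$ is stochastically dominated by a $\mathrm{Bin}(T, 1/4)$ variable of mean $T/4$. The failure event requires $X \geq T/2$, which is the $\delta = 1$ deviation for the dominating binomial. A standard multiplicative Chernoff bound gives $\Pr[X \geq T/2] \leq \exp(-T/12)$, and for $T \geq 24\ln(1/\Delta)$ this is at most $\exp(-2\ln(1/\Delta)) = \Delta^2 \leq \Delta$. Combining the two cases proves \eqref{eqn:ub}.

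The main subtlety is the monotonicity step linking $|S| > U(n,m,f)$ to $\Pr[S(h) \geq 1] \geq 3/4$: Lemma \ref{upperb:lemma} is stated as an upper bound on $|S|$ in terms of $\Pr[S(h)\geq 1]$, and one must invert this carefully via the monotonicity of $g$ to convert a lower bound on $|S|$ into a lower bound on $\Pr[S(h)\geq 1]$. Beyond this, the remainder is a routine Chernoff calculation, and the constant $24$ in the sample-size requirement leaves ample slack.
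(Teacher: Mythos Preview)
Your proof is correct and follows essentially the same route as the paper: the identical case split on $|S| \leq U(n,m,f)$ versus $|S| > U(n,m,f)$, with the second case reducing to $\Pr[S(h)=0]\leq 1/4$ and then a Chernoff bound on the median. The only cosmetic differences are that you invoke Lemma~\ref{upperb:lemma} as a black box (via contradiction) where the paper re-runs the Cantelli step inline, and your Chernoff constant $\exp(-T/12)$ is sharper than the paper's $\exp(-T/24)$, leaving even more slack against the $T \geq 24\ln(1/\Delta)$ hypothesis.
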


Note that the upper bound on $|S|$ given by Theorem~\ref{thm:UB} is vacuous unless $\mathrm{Median}(\I[S(h_1)=0], \cdots, \I[S(h_T)=0])=1$. For brevity, let $\mathcal{E}$ denote this random event. $\mathcal{E}$ can be extremely unlikely for certain values of $m$, $|S|$, and $f$. For instance, if $|S|>0$ and $m=0$, event $\mathcal{E}$ is impossible, and the upper bound is necessarily vacuous. 
However, for $|S|>0$, as $m$ grows (i.e., more and more independent parity constraints are added), $\Pr[S(h)=0]$ increases and eventually approaches 1, making $\mathcal{E}$ increasingly likely. 
It is possible to show via Markov's inequality that, regardless of the value of $f$,  $\mathcal{E}$ is likely to occur when $m \geq \log|S|+c$ for some constant $c>0$. Thus, for large enough $m$, one expects Theorem~\ref{thm:UB} to provide a non-trivial upper bound on $|S|$.

When $f=0.5$, $v(z)=z / 2^m (1-1/2^m)$. This means that $U(n,m,0.5) \approx 3 \cdot 2^m$. 
For $m^* \approx \log|S|+c$ and $f=0.5$, Theorem \ref{thm:UB} provides a good upper bound on $|S|$ (with multiplicative error bounded by a constant factor). This is essentially the standard analysis for counting using hash functions in the pairwise independent case~\cite{trevisan2002lecture}.

When $f<0.5$, however,
it is possible that $U(n,m,f) \gg 2^m$, resulting in potentially loose upper bounds on $|S|$ even for small values of $m$.

We also note that when $f=0.5$, event $\mathcal{E}$ is unlikely to occur when $m \leq \log|S|-c$ for some constant $c$. When $f<0.5$, this need not be the case.

\section{Experimental Evaluation}

\subsection{Model Counting Benchmarks \footnote{Source code for this experiment can be found at https://github.com/ShengjiaZhao/XORModelCount} }

We evaluate the quality of our new bounds on a standard model counting benchmark (ANOR2011) from \namecite{kroc2011leveraging}. Both lower and upper bounds presented in the previous section are parametric: they depend both on $m$, the number of constraints, and $f$, the constraint density. Increasing $f$ is always beneficial, but can substantially increase the runtime. The dependence on $m$ is more complex, and we explore it empirically. To evaluate our new bounds, we consider a range of values for $f \in [0.01,0.5]$, and use a heuristic approach to first identify a promising value for $m$ using a small number of samples $T$, and then collect more samples for that $m$ to reliably estimate $P[S(h)\geq 1]$ and improve the bounds on $|S|$. 

We primarily compare our results to ApproxMC \cite{chakraborty2013scalable}, which can compute a constant factor approximation to user specified precision with arbitrary confidence (at the cost of more computation time). ApproxMC is similar to SPARSE-COUNT, and uses long parity constraints. For both methods, Cryptominisat version 2.9.4~\cite{soos2009extending} is used as the NP-oracle $\mathcal{O}_S$ (with Gaussian elimination enabled), and the confidence parameter is set to $0.95$, so that bounds reported hold with $95\%$ probability.

Our results on the \emph{Langford12} instance (576 variables, 13584 clauses, $10^5 \approx \exp(11.5)$ solutions) are shown in Figure~\ref{fig:SATLB}. The pattern is representative of all other instances we tested. The tradeoff between quality of the bounds and runtime, which is governed by $f$, clearly emerges. Instances with small $f$ values can be solved orders of magnitude faster than with full length XORs ($f \approx 0.5$), but provide looser bounds. Interestingly, lower bounds are not very sensitive to $f$, and we empirically obtain good bounds even for very small values of $f$. 
We also evaluate ApproxMC (horizontal and vertical lines) with parameter setting of $\epsilon=0.75$ and $\delta=0.05$, obtaining an 8-approximation with probability at least $0.95$. The runtime is $47042$ seconds. It can be seen that ApproxMC and our bounds offer comparable model counts for dense $f \approx 0.5$. 
However, our method allows to trade off computation time against the quality of the bounds. 
We obtain non-trivial upper bounds using as little as $0.1\%$ of the computational resources required with long parity constraints, a flexibility not offered by any other method.

Table~\ref{table:SATCount} summarizes our results on other instances from the benchmark and compares them with ApproxMC with a 12 hour timeout. We see that for the instances on which ApproxMC is successful, our method obtains approximate model counts of comparable quality and is generally faster. While ApproxMC requires searching for tens or even hundreds of solutions during each iteration, our method needs only one solution per iteration. Further, we see that long parity constraints can lead to very difficult instances that cannot be solved, thereby reinforcing the benefit of provable upper and lower bounds using sparse constraints (small $f$). Our method is designed to produce non-trivial bounds even when the computational budget is significantly limited, with bounds degrading gracefully with runtime. 

\FloatBarrier
\begin{figure}[ht]
    \centering
    \includegraphics[trim={5ex 5ex 5ex 14ex},clip,width=.95\columnwidth]{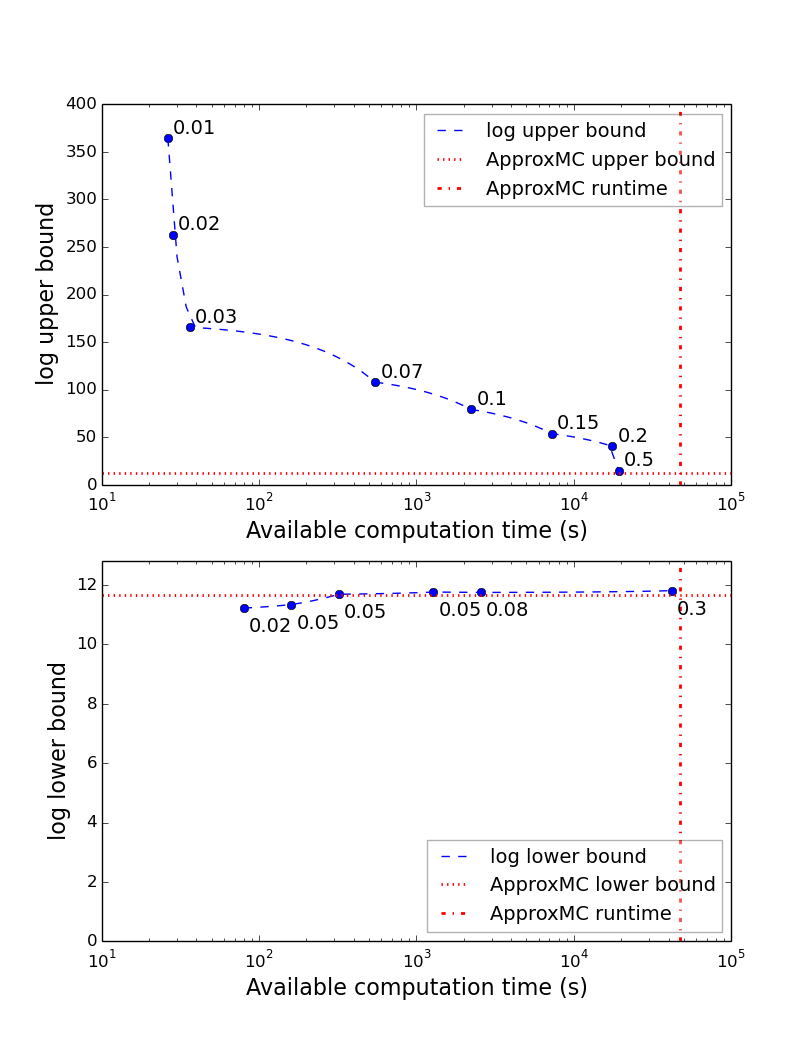}
    \caption{Log upper and lower bounds vs.\ computation time. Point labels indicate the value of $f$ used.}
    \label{fig:SATLB}
\end{figure}

\begin{table*}[t]
\caption{Comparison of run time and bound quality on the ANOR2011 dataset. All true counts and bounds are in log scale.}
\label{table:SATCount}
\centering
\begin{small}
\begin{tabular}{|lcc|ccc|ccc|cc|}
\hline
 & Num. & True & \multicolumn{3}{c|}{Upper Bound} \Tstrut & \multicolumn{3}{c|}{Lower Bound} & \multicolumn{2}{c|}{ApproxMC} \\
SAT Instance & Vars. & Count & UB  & Runtime(s) & $f$ & LB & Runtime (s) & $f$ & Estimate & Runtime (s) \\
\hline
\Tstrut
lang12         & 576  & -- & 14.96 & 19000 & 0.5 & 11.17 & 1280 & 0.1 & 12.25 & 47042 \\
wff.3.150.525  & 150  & 32.57 & 35.06 & 3800  & 0.5 & 31.69 & 3600 & 0.1  & 32.58 & 43571 \\
\hline
\Tstrut
2bitmax-6      & 252  & 67.52 & 69.72 & 355   & 0.5 & 67.17 & 1000 & 0.5  & -- & timeout \\
 lang15         & 1024 & -- & 352.5 & 240   & 0.02 & 19.58 & 6400 & 0.02 & -- & timeout \\
ls8-normalized & 301  & 27.01 & 201.5 & 3800 & 0.02 & 25.34 & 3600 & 0.02 & -- & timeout \\
wff.3.100.150  & 100  & 48.94 & 51.00 & 2100  & 0.5 & 48.30 & 3600 & 0.5  & -- & timeout \\
wff.4.100.500  & 100  & --    & 69.31 & 1100  & 0.05 & 37.90 & 3600 & 0.05 & -- & timeout \\
\hline
\end{tabular}
\end{small}
\end{table*}

\subsection{SMT Models for Contingency Table Analysis}

In statistics, a \emph{contingency table} is a matrix that captures the (multivariate) frequency distribution of two variables with $r$ and $c$ possible values, resp. For example, if the variables are gender (male or female) and handedness (left- or right-handed), then the corresponding $2 \times 2$ contingency table contains frequencies for the four possible value combinations, and is associated with $r$ row sums and $c$ column sums, also known as the row and column marginals.

Fisher's exact test \cite{fisher1954} tests contingency tables for homogeneity of proportion. Given fixed row and column marginals, it computes the probability of observing the entries found in the table under the null hypothesis that the distribution across rows is independent of the distribution across columns. This exact test, however, quickly becomes intractable as $r$ and $c$ grow. Statisticians, therefore, often resort to approximate significance tests, such as the chi-squared test.

The associated counting question is: \emph{Given $r$ row marginals $R_i$ and $c$ column marginals $C_j$, how many $r \times c$ integer matrices $M$ are there with these row and column marginals?} When entries of $M$ are restricted to be in $\{0,1\}$, the corresponding contingency table is called binary. We are interested in counting both binary and integer tables.

This counting problem for integer tables is \#P-complete even when $r$ is 2 or $c$ is 2~\cite{dyer1997sampling}. Further, for binary contingency tables with so-called structural zeros~\cite{chen2007conditional} (i.e., certain entries of $M$ are required to be $0$), we observe that the counting problem is still \#P-complete. This can be shown via a reduction from the well-known \emph{permanent} computation problem, which is \#P-complete even for 0/1 matrices~\cite{valiant1979permanent}. 

Model counting for contingency tables is formulated most naturally using integer variables and arithmetic constraints for capturing the row and column sums. While integer linear programming (ILP) appears to be a natural fit, ILP solvers do not scale very well on this problem as they are designed to solve optimization problems and not feasibility queries. We therefore propose to encode the problem using a Satisfiability Modulo Theory (SMT) framework~\cite{BarST-SMTLIB}, which extends propositional logic with other underlying theories, such as bitvectors and real arithmetic. We choose a bitvector encoding where each entry $a_{ij}$ of $M$ is represented as a bitvector of size $\lceil \log_2 \min(R_i, C_{j}) \rceil$. The parity constraints are then randomly generated over the individual bits of each bitvector, and natively encoded into the model as XORs. As a solver, we use Z3~\cite{de2008z3}.

We evaluate our bounds on six datasets:

\textbf{Darwin's Finches (df).}
The marginals for this binary contingency table dataset are from \citet{chen2005sequential}. This is one of the few datasets with known ground truth: $\log_2 |S| \approx 55.8982$, found using a clever divide-and-conquer algorithm of David desJardins. 
The 0-1 label in cell $(x, y)$ indicates the presence or absence of one of 13 finch bird species $x$ at one of 17 locations $y$ in the Gal\'apagos Islands. To avoid trivialities, we drop one of the species that appears in every island, resulting in $12 \times 17 = 204$ binary variables.

\textbf{Climate Change Perceptions (icons).}
This $9 \times 6$ non-binary dataset is taken from the \texttt{alymer} \texttt{R} package \cite{west2008exact}. It concerns lay perception of climate change. The dataset is based on a study reported by \citet{oneil2008:thesis} in which human subjects are asked to identify which icons (such as polar bears) they find the most concerning. There are 18 structural zeros representing that not all icons were shown to all subjects.

\textbf{Social Anthropology (purum).}
This $5 \times 5$ non-binary dataset~\cite{west2008exact} concerns marriage rules of an isolated tribe in India called the Purums, which is subdivided into $5$ sibs. Structured zeros represent marriage rules that prevent some sibs from marrying other sibs. 

\textbf{Industrial Quality Control (iqd).}
This $4 \times 7$ non-binary dataset~\cite{west2008exact} captures an industrial quality control setting. Cell $(x,y)$ is the number of defects in the $x$-th run attributable to machine $y$. It has 9 structured zeros, representing machines switched off for certain runs.

\textbf{Synthetic Data (synth).}
This $n \times n$ binary dataset contains \emph{blocked matrices}~\cite{golshan2013row}. The row and column marginals are both $\{1,n-1, \ldots, n-1\}$. It can be seen that a blocked matrix has either has a value of 1 in entry $(1, 1)$ or it has two distinct entries with value 1 in the first row and the first column, cell $(1, 1)$ excluded. Instantiating the first row and the first column completely determines the rest of the table. It is also easy to verify that the desired count is $1+(n-1)^2$.

\begin{table}[t]
\caption{Lower (LB) and upper (UB) bounds on $\log_2 |S|$.
The trivial upper bound (Trv.~UB) is the number of binary variables. $f^\ast$ denotes the best previously known minimum $f$ \cite{ermon2014low} required for provable upper bounds.}
\label{summary-table}
\centering
\begin{small}
\addtolength{\tabcolsep}{-0.7ex}
\begin{tabular}{lcc|rrrr}
\hline
\Tstrut
Dataset & Table Size & $f^\ast$ & LB ($f$) & $\log_2 |S|$ & UB & Trv.~UB\\
\hline
df  \Tstrut  & 12 $\times$ 17  & 0.18 & {\bf 53} (0.03) & 55.90 & {\bf 150} & 204 \\
icons & 9 $\times$ 6    & 0.19 & {\bf 58} (0.04) & - & {\bf 183} & 236 \\
purum & 5 $\times$ 5    & 0.26 & {\bf 29} (0.13) & - & {\bf 52}  & 125 \\
iqd   & 4 $\times$ 7    & 0.34 & {\bf 15} (0.10) & - & {\bf 17}  & 76  \\
synth\_8 & 8 $\times$ 8 & 0.41 & {\bf 5}  (0.30) & 5.64 & {\bf 16} & 64\\
synth\_20 & 20 $\times$ 20 & 0.42 & {\bf 8} (0.40) & 8.49 & {\bf 14} & 400\\
\hline
\end{tabular}
\end{small}
\end{table}

Table \ref{summary-table} summarizes the obtained lower and upper bounds on the number of contingency tables, with a 10 minute timeout. For the datasets with ground truth, we see that very sparse parity constraints (e.g., $f=0.03$ for the Darwin finches dataset, as opposed to a theoretical minimum of $f^*=0.18$) often suffice in practice to obtain very accurate lower bounds. For the iqd dataset, we obtain upper and lower bounds within a small constant factor. For other datasets, there is a wider gap between the upper and lower bounds. However, the upper bounds we obtain are orders of magnitude tighter than the trivial log-upper bounds, which is the number of variables in a binary encoding of the problem.

\section{Conclusions}

We introduced a novel analysis of the randomized hashing schemes used by numerous recent approximate model counters and probabilistic inference algorithms. We close a theoretical gap, providing a tight asymptotic estimate for the minimal constraint density required. Our analysis also shows, for the first time, that even very short parity constraints can be used to generate non-trivial upper bounds on model counts. Thanks to this finding, we proposed a new scheme for computing anytime upper and lower bounds on the model count. Asymptotically, these bounds are guaranteed to become tight (up to a constant factor) as the constraint density grows. Empirically, given very limited computational resources, we are able to obtain new upper bounds on a variety of benchmarks, including a novel application for the analysis of statistical contingency tables.

A promising direction for future research is the analysis of related ensembles of random parity constraints, such as low-density parity check codes~\cite{achlioptasstochastic}. 

\section{Acknowledgments}
This work was supported by the Future of Life Institute (grant 2015-143902).

\bibliographystyle{aaai}
\begin{small}
\bibliography{xorcount}

\begin{thebibliography}{}

\bibitem[\protect\citeauthoryear{Achlioptas and
  Jiang}{2015}]{achlioptasstochastic}
Achlioptas, D., and Jiang, P.
\newblock 2015.
\newblock Stochastic integration via error-correcting codes.
\newblock In {\em Proc. Uncertainty in Artificial Intelligence}.

\bibitem[\protect\citeauthoryear{Angluin and Valiant}{1979}]{angluin1979fast}
Angluin, D., and Valiant, L.
\newblock 1979.
\newblock Fast probabilistic algorithms for hamiltonian circuits and matchings.
\newblock {\em Journal of Computer and System Sciences} 18(2):155--193.

\bibitem[\protect\citeauthoryear{Barrett, Stump, and
  Tinelli}{2010}]{BarST-SMTLIB}
Barrett, C.; Stump, A.; and Tinelli, C.
\newblock 2010.
\newblock {The Satisfiability Modulo Theories Library (SMT-LIB)}.
\newblock {\tt www.SMT-LIB.org}.

\bibitem[\protect\citeauthoryear{Belle, Van~den Broeck, and
  Passerini}{2015}]{belle2015hashing}
Belle, V.; Van~den Broeck, G.; and Passerini, A.
\newblock 2015.
\newblock Hashing-based approximate probabilistic inference in hybrid domains.
\newblock In {\em Proceedings of the 31st Conference on Uncertainty in
  Artificial Intelligence (UAI)}.

\bibitem[\protect\citeauthoryear{Biere \bgroup et al\mbox.\egroup
  }{2009}]{biere2009handbook}
Biere, A.; Heule, M.; van Maaren, H.; and Walsh, T.
\newblock 2009.
\newblock Handbook of satisfiability. frontiers in artificial intelligence and
  applications, vol. 185.

\bibitem[\protect\citeauthoryear{Calabro}{2009}]{calabro2009entropy}
Calabro, C.
\newblock 2009.
\newblock {\em The Exponential Complexity of Satisfiability Problems}.
\newblock Ph.D. Dissertation, University of California, San Diego.

\bibitem[\protect\citeauthoryear{Chakraborty, Meel, and Vardi}{2013a}]{cav13}
Chakraborty, S.; Meel, K.; and Vardi, M.
\newblock 2013a.
\newblock A scalable and nearly uniform generator of {SAT} witnesses.
\newblock In {\em Proc. of the 25th International Conference on Computer Aided
  Verification (CAV)}.

\bibitem[\protect\citeauthoryear{Chakraborty, Meel, and
  Vardi}{2013b}]{chakraborty2013scalable}
Chakraborty, S.; Meel, K.; and Vardi, M.
\newblock 2013b.
\newblock A scalable approximate model counter.
\newblock In {\em Proc. of the 19th International Conference on Principles and
  Practice of Constraint Programming (CP)},  200--216.

\bibitem[\protect\citeauthoryear{Chen \bgroup et al\mbox.\egroup
  }{2005}]{chen2005sequential}
Chen, Y.; Diaconis, P.; Holmes, S.~P.; and Liu, J.~S.
\newblock 2005.
\newblock Sequential monte carlo methods for statistical analysis of tables.
\newblock {\em Journal of the American Statistical Association}
  100(469):109--120.

\bibitem[\protect\citeauthoryear{Chen}{2007}]{chen2007conditional}
Chen, Y.
\newblock 2007.
\newblock Conditional inference on tables with structural zeros.
\newblock {\em Journal of Computational and Graphical Statistics} 16(2).

\bibitem[\protect\citeauthoryear{Cooper}{2000}]{cooper2000rank}
Cooper, C.
\newblock 2000.
\newblock On the rank of random matrices.
\newblock {\em Random Structures \& Algorithms} 16(2):209--232.

\bibitem[\protect\citeauthoryear{De~Moura and Bj{\o}rner}{2008}]{de2008z3}
De~Moura, L., and Bj{\o}rner, N.
\newblock 2008.
\newblock Z3: An efficient smt solver.
\newblock In {\em Tools and Algorithms for the Construction and Analysis of
  Systems}. Springer.
\newblock  337--340.

\bibitem[\protect\citeauthoryear{Dyer, Kannan, and
  Mount}{1997}]{dyer1997sampling}
Dyer, M.; Kannan, R.; and Mount, J.
\newblock 1997.
\newblock Sampling contingency tables.
\newblock {\em Random Structures and Algorithms} 10(4):487--506.

\bibitem[\protect\citeauthoryear{Ermon \bgroup et al\mbox.\egroup
  }{2013a}]{uai13LPCount}
Ermon, S.; Gomes, C.~P.; Sabharwal, A.; and Selman, B.
\newblock 2013a.
\newblock Optimization with parity constraints: From binary codes to discrete
  integration.
\newblock In {\em Proc. of the 29th Conference on Uncertainty in Artificial
  Intelligence (UAI)}.

\bibitem[\protect\citeauthoryear{Ermon \bgroup et al\mbox.\egroup
  }{2013b}]{wishicml13}
Ermon, S.; Gomes, C.~P.; Sabharwal, A.; and Selman, B.
\newblock 2013b.
\newblock Taming the curse of dimensionality: Discrete integration by hashing
  and optimization.
\newblock In {\em Proc. of the 30th International Conference on Machine
  Learning (ICML)}.

\bibitem[\protect\citeauthoryear{Ermon \bgroup et al\mbox.\egroup
  }{2014}]{ermon2014low}
Ermon, S.; Gomes, C.~P.; Sabharwal, A.; and Selman, B.
\newblock 2014.
\newblock Low-density parity constraints for hashing-based discrete
  integration.
\newblock In {\em Proc. of the 31st International Conference on Machine
  Learning (ICML)},  271--279.

\bibitem[\protect\citeauthoryear{Fisher}{1954}]{fisher1954}
Fisher, R.
\newblock 1954.
\newblock {\em Statistical Methods for Research Workers}.
\newblock Oliver and Boyd.

\bibitem[\protect\citeauthoryear{Gogate and
  Dechter}{2007}]{gogate2007approximate}
Gogate, V., and Dechter, R.
\newblock 2007.
\newblock {Approximate counting by sampling the backtrack-free search space}.
\newblock In {\em Proc. of the 22nd National Conference on Artifical
  Intelligence (AAAI)}, volume~22,  198--203.

\bibitem[\protect\citeauthoryear{Gogate and
  Domingos}{2011}]{gogate2011probabilistic}
Gogate, V., and Domingos, P.
\newblock 2011.
\newblock Probabilistic theorem proving.
\newblock In {\em Uncertainty in Artificial Intelligence}.

\bibitem[\protect\citeauthoryear{Golshan, Byers, and
  Terzi}{2013}]{golshan2013row}
Golshan, B.; Byers, J.; and Terzi, E.
\newblock 2013.
\newblock What do row and column marginals reveal about your dataset?
\newblock In {\em Advances in Neural Information Processing Systems},
  2166--2174.

\bibitem[\protect\citeauthoryear{Gomes \bgroup et al\mbox.\egroup
  }{2007}]{ghss07:shortxors}
Gomes, C.~P.; Hoffmann, J.; Sabharwal, A.; and Selman, B.
\newblock 2007.
\newblock Short {XORs} for model counting: From theory to practice.
\newblock In {\em Theory and Applications of Satisfiability Testing (SAT)},
  100--106.

\bibitem[\protect\citeauthoryear{Gomes, Sabharwal, and Selman}{2006}]{mbound}
Gomes, C.~P.; Sabharwal, A.; and Selman, B.
\newblock 2006.
\newblock Model counting: A new strategy for obtaining good bounds.
\newblock In {\em Proc. of the 21st National Conference on Artificial
  Intelligence (AAAI)},  54--61.

\bibitem[\protect\citeauthoryear{Guruswami}{2010}]{codlect}
Guruswami, V.
\newblock 2010.
\newblock Introduction to coding theory - lecture notes.

\bibitem[\protect\citeauthoryear{Ivrii \bgroup et al\mbox.\egroup
  }{2015}]{ivrii2015computing}
Ivrii, A.; Malik, S.; Meel, K.~S.; and Vardi, M.~Y.
\newblock 2015.
\newblock On computing minimal independent support and its applications to
  sampling and counting.
\newblock {\em Constraints}  1--18.

\bibitem[\protect\citeauthoryear{Kolchin}{1999}]{kolchin1999random}
Kolchin, V.~F.
\newblock 1999.
\newblock {\em Random graphs}.
\newblock Number~53 in Encyclopedia of Mathematics and its Applications.
  Cambridge University Press.

\bibitem[\protect\citeauthoryear{Kroc, Sabharwal, and
  Selman}{2011}]{kroc2011leveraging}
Kroc, L.; Sabharwal, A.; and Selman, B.
\newblock 2011.
\newblock Leveraging belief propagation, backtrack search, and statistics for
  model counting.
\newblock {\em Annals of Operations Research} 184(1):209--231.

\bibitem[\protect\citeauthoryear{O'Neil}{2008}]{oneil2008:thesis}
O'Neil, S.
\newblock 2008.
\newblock {\em An Iconic Approach to Communicating Climate Change}.
\newblock Ph.D. Dissertation, School of Environmental Science, University of
  East Anglia.

\bibitem[\protect\citeauthoryear{Richardson and
  Domingos}{2006}]{richardson2006markov}
Richardson, M., and Domingos, P.
\newblock 2006.
\newblock Markov logic networks.
\newblock {\em Machine Learning} 62(1):107--136.

\bibitem[\protect\citeauthoryear{Sang, Beame, and
  Kautz}{2005}]{sang2005solving}
Sang, T.; Beame, P.; and Kautz, H.
\newblock 2005.
\newblock Solving {B}ayesian networks by weighted model counting.
\newblock In {\em Proc. of the 20th National Conference on Artificial
  Intelligence (AAAI)}, volume~1,  475--482.

\bibitem[\protect\citeauthoryear{Sang \bgroup et al\mbox.\egroup
  }{2004}]{sang2004combining}
Sang, T.; Bacchus, F.; Beame, P.; Kautz, H.; and Pitassi, T.
\newblock 2004.
\newblock {Combining component caching and clause learning for effective model
  counting}.
\newblock In {\em Theory and Applications of Satisfiability Testing (SAT)}.

\bibitem[\protect\citeauthoryear{Sheldon and
  Dietterich}{2011}]{sheldon2011collective}
Sheldon, D.~R., and Dietterich, T.~G.
\newblock 2011.
\newblock Collective graphical models.
\newblock In {\em Advances in Neural Information Processing Systems},
  1161--1169.

\bibitem[\protect\citeauthoryear{Sinclair}{2011}]{rclect}
Sinclair, A.
\newblock 2011.
\newblock Randomness and computation - lecture notes.

\bibitem[\protect\citeauthoryear{Soos, Nohl, and
  Castelluccia}{2009}]{soos2009extending}
Soos, M.; Nohl, K.; and Castelluccia, C.
\newblock 2009.
\newblock Extending {SAT} solvers to cryptographic problems.
\newblock In {\em Theory and Applications of Satisfiability Testing (SAT)}.

\bibitem[\protect\citeauthoryear{Stockmeyer}{1985}]{stockmeyer1985approximation}
Stockmeyer, L.
\newblock 1985.
\newblock On approximation algorithms for \#{P}.
\newblock {\em SIAM Journal on Computing} 14(4):849--861.

\bibitem[\protect\citeauthoryear{Trevisan}{2004}]{trevisan2002lecture}
Trevisan, L.
\newblock 2004.
\newblock Lecture notes on computational complexity.

\bibitem[\protect\citeauthoryear{Tseitin}{1968}]{tseitin68}
Tseitin, G.~S.
\newblock 1968.
\newblock On the complexity of derivation in the propositional calculus.
\newblock In Slisenko, A.~O., ed., {\em Studies in Constructive Mathematics and
  Mathematical Logic, Part II}.

\bibitem[\protect\citeauthoryear{Valiant}{1979a}]{valiant1979permanent}
Valiant, L.~G.
\newblock 1979a.
\newblock The complexity of computing the permanent.
\newblock {\em Theoretical computer science} 8(2):189--201.

\bibitem[\protect\citeauthoryear{Valiant}{1979b}]{valiant1979complexity}
Valiant, L.
\newblock 1979b.
\newblock The complexity of enumeration and reliability problems.
\newblock {\em SIAM Journal on Computing} 8(3):410--421.

\bibitem[\protect\citeauthoryear{West and Hankin}{2008}]{west2008exact}
West, L.~J., and Hankin, R.~K.
\newblock 2008.
\newblock Exact tests for two-way contingency tables with structural zeros.
\newblock {\em Journal of Statistical Software} 28(11):1--19.

\bibitem[\protect\citeauthoryear{Zhao \bgroup et al\mbox.\egroup
  }{2015}]{zhaotech}
Zhao, S.; Chaturapruek, S.; Sabharwal, A.; and Ermon, S.
\newblock 2015.
\newblock Closing the gap between short and long xors for model counting.
\newblock Technical report, Stanford University.

\end{thebibliography}
\end{small} 

\section{Appendix: Proofs}

\begin{theorem}[Formal statement of Theorem~\ref{thm:const}] 
Let $\{\mathcal{H}^{f_i}_{i \times n}\}_{i=0}^n$ be families of $f_i$-sparse hash functions.
\begin{itemize}

\item (Sufficiency) If there exist $c \geq 2$ and $\epsilon > 0$ such that for all $i$, $\mathcal{H}^{f_i}_{i \times n}$ $\epsilon$-shatters all sets $S' \subseteq \{0,1\}^n$ of size $|S'|=2^{i + c}$, then for any set $S$, $0<\Delta<1$, and $\alpha \leq 2\, \left(\min(\epsilon,1/2-1/2^c)\right)^2 \ln 2$, SPARSE-COUNT$(\mathcal{O}_S, \Delta, \alpha, \{\mathcal{H}^{f_i}_{i \times n}\} )$ outputs a $2^{c+1}$ approximation of $|S|$
with probability at least $1-\Delta$.

\item (Necessity) If there exists an $i$ and a set $S$ of size $2^{i+c}$ such that for all $\epsilon>0$ $\mathcal{H}^{f_i}_{i \times n}$ does not $\epsilon$-shatter $S$, then for any choice of $\alpha>0$ and $0<\Delta<1$, SPARSE-COUNT$(\mathcal{O}_S, \Delta, \alpha, \{\mathcal{H}^{f_i}_{i \times n}\} )$ outputs a $2^c$ approximation of $|S|$
with probability at most $1/2$.
\end{itemize}
\end{theorem}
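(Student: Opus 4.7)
The plan is to analyze when the break index $i^\ast$ produced by SPARSE-COUNT lies in a window around $\log_2|S|$. Set $L = \log_2|S|$, and introduce two thresholds $i_\ell = \lfloor L\rfloor - c$ and $i_u = \lceil L\rceil + c$. The key observation is that the per-iteration success probability $p_i := \Pr[S(h) \geq 1]$, with $h \in_R \mathcal{H}^{f_i}_{i \times n}$, is provably large for small $i$ and provably small for large $i$: for every $i \leq i_\ell$, the set $S$ contains a subset $S'$ of size exactly $2^{i+c}$, so shattering gives $p_i \geq \Pr[S'(h)\geq 1] \geq 1/2+\epsilon$; and for every $i \geq i_u$, Markov's inequality gives $p_i \leq |S|/2^i \leq 1/2^c \leq 1/2 - (1/2 - 1/2^c)$, where the last inequality uses $c \geq 2$.

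For the sufficiency direction, I would then apply Hoeffding's inequality at each iteration. The probability that the median of $T$ i.i.d.\ $\mathrm{Bernoulli}(p_i)$ samples equals $0$ is at most $\exp(-2\epsilon^2 T)$ when $p_i \geq 1/2+\epsilon$, and the probability that it equals $1$ is at most $\exp(-2(1/2-1/2^c)^2 T)$ when $p_i \leq 1/2^c$. A union bound over the at most $i_\ell + 1 \leq \log_2|S| + 1$ ``stop-too-early'' indices and the single ``stop-too-late'' index $i_u$ bounds the overall failure probability by $(n+1)\exp\bigl(-2\min(\epsilon, 1/2-1/2^c)^2 T\bigr)$. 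Substituting $T = \lceil \log(1/\Delta) \log n / \alpha \rceil$ together with $\alpha \leq 2\min(\epsilon, 1/2-1/2^c)^2 \ln 2$ makes this bound at most $\Delta$ after reconciling the logarithm bases. On the complementary good event, $i_\ell < i^\ast \leq i_u$, so $\lfloor L\rfloor - c \leq i^\ast - 1 \leq \lceil L\rceil + c - 1$, which yields $|S|/2^{c+1} \leq \lfloor 2^{i^\ast-1}\rfloor \leq |S|\cdot 2^{c+1}$.

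For the necessity direction, the argument is short. Let $(i,S)$ be as in the hypothesis, so $|S| = 2^{i+c}$. ``Not $\epsilon$-shattering for every $\epsilon>0$'' forces $p_i \leq 1/2$, so each $w_i^t$ is a Bernoulli draw with mean at most $1/2$. An elementary calculation (with ties broken toward $0$, matching the ``$<1$'' test in line~9 of Algorithm~\ref{algo_wish}) gives $\Pr[\mathrm{Median}(w_i^1,\ldots,w_i^T)\geq 1] \leq 1/2$ for every $T$. Outputting a $2^c$-approximation requires $2^{i^\ast-1} \geq |S|/2^c = 2^i$, i.e.\ $i^\ast \geq i+1$, which in turn requires the median at iteration $i$ to equal $1$. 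Hence the success probability is at most $1/2$, independent of $\alpha$ and $\Delta$.

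The main technical obstacle will be the constant bookkeeping in the sufficiency half: propagating ceilings and floors through $\lfloor 2^{i^\ast-1}\rfloor$ so that the final factor is $2^{c+1}$ rather than $2^{c+2}$, and verifying that $T$ as specified truly suffices for the union bound across the stated range of $\alpha$. The appearance of $\min(\epsilon, 1/2-1/2^c)$ in the hypothesis arises naturally from taking the worse of the two Chernoff gaps, namely the shattering margin $\epsilon$ above $1/2$ on the ``too large $|S|$'' side and the Markov margin $1/2 - 1/2^c$ on the ``too small $|S|$'' side; neither can be improved without extra assumptions.
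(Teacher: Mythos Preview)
Your proposal is correct and follows essentially the same approach as the paper's proof: both establish the two key probability bounds (shattering gives $p_i \geq 1/2+\epsilon$ on one side, Markov gives $p_i \leq 1/2^c$ on the other, with $c\geq 2$ ensuring a gap), then invoke Hoeffding/Chernoff plus a union bound over iterations for sufficiency, and for necessity both observe that $p_i \leq 1/2$ forces the median test at level $i$ to fail with probability at least $1/2$. The paper is terser---it packages your Hoeffding-plus-union-bound step as ``standard analysis following \citet{ermon2014low}'' and phrases the two regimes via an auxiliary parameter $\delta = \min(2^c, 1/(1/2-\epsilon))$---but the substance is the same, and your explicit subset argument (passing to $S'\subseteq S$ of size exactly $2^{i+c}$) is exactly what the paper's condition~(b) implicitly relies on.
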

\begin{proof}

For the sufficiency part, we show that there exist $c > 0$ and $\delta > 2$ such that for all $i$ two conditions hold:\\
(a) for all sets $S \subseteq \{0,1\}^n$ of size $|S| \leq 2^{i - c}$
\[ \Pr[S(h) = 0] \geq 1 - \frac{1}{\delta} \]
 when $h$ is chosen from $\mathcal{H}^{f_i}_{i \times n}$\\
(b)  for all sets $S \subseteq \{0,1\}^n$ of size $|S| \geq 2^{i + c}$
\[ \Pr[S(h) \geq 1] \geq 1 - \frac{1}{\delta} \]
when $h$ is chosen from $\mathcal{H}^{f_i}_{i \times n}$. 
Standard analysis following \namecite{ermon2014low} then implies that for any set $S$ and $0<\Delta<1$, if $\alpha \leq 2(1-\frac{1}{\delta} - \frac{1}{2})^2
\ln 2$, we have
\[
\frac{|S|}{2^{c+1}} \leq \text{SPARSE-COUNT}(\mathcal{O}_S, \Delta, \alpha, \{\mathcal{H}^{f_i}_{i \times n}\} ) \leq |S|2^{c}
\]
with probability at least $1-\Delta$. 

The second condition (b) is implied by the shattering properties of $h$ in the assumptions for some $c=c'$ and $\epsilon=1/2-\frac{1}{\delta}$.

The first condition (a) is trivially satisfied for any $c \geq 2$ and $\delta=2^c$. 
Formally, we have
\[ \begin{aligned}
	\Pr[S(h) > 0] &= \Pr[S(h) \geq 1] = \Pr[S(h) \geq 2^c \mu_S]  \\
    & \leq  \frac{1}{2^{c}}
    \end{aligned}\]
from Markov's inequality. Conditions (a) and (b) are therefore simultaneously met choosing $c=c'$ and $\delta=\min(2^c,\frac{1}{1/2-\epsilon})$. 

For the necessity part, let $S$ be a set of size $2^{i+c}$ as in the statement of the Theorem, i.e., not shattered by $\mathcal{H}^{f_i}_{i \times n}$. Let us condition on the event that the outer loop of SPARSE-COUNT$(\mathcal{O}_S, \Delta, \alpha, \{\mathcal{H}^{f_i}_{i \times n}\} )$ reaches iteration $i$. For any $T \geq 1$ (therefore, for any choice of $\Delta$ and $\alpha$), the while loop breaks at iteration $i$ with probability at least $1/2$ because by assumption $\Pr[S(h)\geq 1] \leq 1/2$ when $h$ is chosen from $\mathcal{H}^{f_i}_{i \times n}$. Otherwise, $\mathcal{H}^{f_i}_{i \times n}$ would $\epsilon$-shatter $S$ for some $\epsilon>0$. Therefore, the output satisfies $\frac{|S|}{2^{c}} \leq$ SPARSE-COUNT$(\mathcal{O}_S, \Delta, \alpha, \{\mathcal{H}^{f_i}_{i \times n}\})$ with probability at most $1/2$. This also bounds the probability that the output is a $2^c$ approximation of $|S|$.
\end{proof}


\section{Proofs of New Upper and Lower Bounds}

We continue with proofs for the bounds on $|S|$ for arbitrary constraint density $f$. Several of the following proofs will rely on the following notion:
\begin{mydef}[\citenobrackets{ermon2014low}]
\label{def:wstar-etc}
Let $m, n \in \mathbb{N}, m \leq n, f \leq \half,$ and $q \leq 2^n + 1$. Then:
\begin{align}
w^*(n,q) & = \max\left\{w \mid \sum_{j = 1}^{w} \binom{n}{j} \leq q - 1\right\}\\
r(n,q) & = \left(q - 1 - \sum_{w = 1}^{w^*(n,q)} \binom{n}{w}\right)\\
\epsilon(n, m, q, f) & = \frac{1}{q - 1} \left[ \sum_{w = 1}^{w^*(n,q)} \nonumber \binom{n}{w} \frac{1}{2^m} (1 + (1 - 2f)^w)^m \right. \\
    & \ \ \ \ \ \ \ \left. +\, \frac{r}{2^m} (1 + (1 - 2f)^{w^*(n,q) + 1})^m \right] 
\end{align}
\end{mydef}
We observe that $w^*(n,q)$ is always at most $n$. We will often be interested in the case where $q = 2^{m+c}$ for $c \geq 0$.

\subsection{New Lower Bound}

\begin{proof}[Proof of Lemma~\ref{lemma:lb_ideal}]
Let hash function $h_m$ be drawn randomly from $\mathcal{H}^f_{m \times n}$. Recall that our random variable $S(h_m)$ takes on a non-negative integer value. Then, for any $m$, by Markov's inequality,
\begin{align*}
\Pr[S(h_m) \geq 1] \leq \Exp[S(h_m)] = \frac{|S|}{2^m}.
\end{align*}
Hence, for any $m$, $|S| \geq 2^m \Pr[S(h_m) \geq 1]$. Taking the maximum over all choices of $m$ finishes the proof.
\end{proof}

\begin{proof}[Proof of Theorem~\ref{thm:tighter-LB}]


We observe that $\mathbb{E}\left[Y\right]=T\mathbb E\left[Y_{1}\right]=
T\Pr\left[S(h_{m}^1)\geq1\right].$ 
There are two cases: either $\mathbb E\left[Y\right]>cT /(1+\kappa)$ or $\Exp[Y] \leq cT /(1+\kappa)$.

Case 1: $\Exp[Y] > cT/(1+\kappa)$. This implies $\Pr[S(h_{m}^1) \geq 1] > c/(1+\kappa)$. From Lemma~\ref{lemma:lb_ideal}, $|S| \geq 2^m c/(1+\kappa)$
which is no smaller than $\mathcal{B}$ for all realizations of the random variables $h_{m}^1, \cdots, h_{m}^T$. Thus, in this case,
$
\Pr\left[|S|\geq \mathcal{B}\right] =1
$.

Case 2: $\Exp[Y] \leq cT /(1+\kappa)$. Define $Z = \sum_{k=1}^T Z_k$ where $Z_k$ are i.i.d.\ Bernoulli variables with probability $c/(1+\kappa)$ of being 1. Then $\Exp[Z] = cT/(1+\kappa)$. Since $Y_k$ and $Z_k$ are i.i.d.\ Bernoulli random variables with $\Exp[Y_k] \leq \Exp[Z_k]$, we have $\Pr[Y \geq d] \leq \Pr[Z \geq d]$ for any $d$. Thus:
\begin{align*}
\Pr\left[\Pr_{\text{est}}\left[S(h)\geq1\right] \geq c\right] & = \Pr[Y \geq c T] \\
  & \leq \Pr[Z \geq c T] \\
  & = \Pr[Z \geq (1+\kappa)\, \Exp[Z]] \\
  & \leq \exp\left(-\frac{\kappa^2}{2+\kappa} \frac{cT}{1+\kappa} \right)
\end{align*}
where the last inequality follows from Chernoff's bound~\cite{angluin1979fast,rclect}.
Hence, with a probability of at least $1 - \exp\left(-\frac{\kappa^2 cT}{(1+\kappa)(2+\kappa)} \right)$, 
$\mathcal{B}$ must be $0$. We therefore have:
\[
\Pr\left[|S|\geq\mathcal{B}\right] \geq 1 - \exp\left(-\frac{\kappa^2 cT}{(1+\kappa)(2+\kappa)} \right)
\]
This finishes the proof.
\end{proof}

\subsection{New Upper Bound}

\begin{proof}[Proof of Lemma~\ref{lemma:inc}]
Let $f(q)=q^2/v(q)$, where 
\[
v(q)=\frac{q}{2^{m}}\left(1+\epsilon(n, m, q, f)\cdot(q-1)-\frac{q}{2^{m}}\right)
\]
is defined as in Definition \ref{def:vq}. We show that $v(q)<v(q+1)$ for all $q$. Removing constant terms
in $v(q)$ we see that it suffices to show that 
\begin{equation}
g(q)=\frac{q}{B_1 + B_2 - q},
\end{equation}
where $B_1 = 2^{m}+\sum_{w=1}^{w^*(n,q)}{n \choose w}(1+x^{w})^{m}$ and $B_2 = \left(q-1-\sum_{w=1}^{w^*(n,q)}{n \choose w}\right)(1+x^{w^*(n,q)+1})^{m}$,
is an increasing function of $q$. The relevant quantities are defined in Definition \ref{def:wstar-etc}, and $x=1-2f$ for brevity. We show that $g(q)<g(q+1).$ We
note that
\begin{equation}
w^*(n,q+1)=w^*(n,q)+h(q),
\end{equation}
where $h(q)\in\left\{ 0,1\right\} $ and it is $1$ only when $\sum_{j=1}^{w^*(n,q)+1}{n \choose j}=q$
(by looking at the definition). Define 
\begin{align}
&\nonumber t(q)=B_1 + B_2 - q=2^{m}+\sum_{w=1}^{w^*(n,q)}{n \choose w}(1+x^{w})^{m}+\\
&\qquad\left(q-1-\sum_{w=1}^{w^*(n,q)}{n \choose w}\right)(1+x^{w^*(n,q)+1})^{m}-q.
\end{align}

\begin{description}
\item [{Case~1:~$h(q)=0$.}] We have $g(q)<g(q+1)$ if and only if
\begin{equation}
\frac{q}{t(q)}<\frac{q+1}{t(q)+(1+x^{w^*(n,q)+1})^{m}-1},
\end{equation}
which is true if and only if
\begin{equation}
(1+x^{w^*(n,q)+1})^{m}q-q<t(q).
\end{equation}
Expanding the definition of $t(q),$ we get that the above inequality
is true if and only if
\begin{align}
&\nonumber 0<\left(2^{m}-(1+x^{w^*(n,q)+1})^{m}\right)+\\
&\qquad\sum_{w=1}^{w^*(n,q)}{n \choose w}\left((1+x^{w})^{m}-(1+x^{w^*(n,q)+1})^{m}\right),
\end{align}
which is true because both terms are positive.
\item [{Case~2:~$h(q)=1$.}] This implies $\sum_{j=1}^{w^*(n,q)+1}{n \choose j}=q.$
We have $g(q)<g(q+1)$ if and only if
\begin{equation}
\frac{q}{t(q)}<\frac{q+1}{t(q+1)},
\end{equation}
and we have 
\begin{equation}
t(q+1) = 2^{m}+\sum_{w=1}^{w^*(n,q)+1}{n \choose w}(1+x^{w})^{m}-q-1.
\end{equation}
Expanding the definition of $t(q)$ and $t(q+1),$ we get that the
above inequality is true if and only if
\begin{align}
&\nonumber 0<\left(2^{m}-(1+x^{w^*(n,q)+1})^{m}\right)+\\
&\qquad\sum_{w=1}^{w^*(n,q)}{n \choose w}\left((1+x^{w})^{m}-(1+x^{w^*(n,q)+1})^{m}\right),
\end{align}
which is true because both terms are positive (note this is the same inequality as before).
\end{description}
\end{proof}

\begin{proof}[Proof of Lemma~\ref{upperb:lemma}]
Let $Q\subseteq\{0,1\}^n$ be any set of size exactly $q$ and $h \in_R \mathcal{H}^f_{m \times n}$. Following \namecite{ermon2014low}, we can get a worst-case bound for the variance of $Q(h)$ as a function of $q$. Regardless of the structure of $Q$, we have
\[
\sigma^2(Q) \leq v(q)= \frac{q}{2^m} \left(1 + \epsilon(n,m,q,f)(q-1) -\frac{q}{2^m}\right) 
\]
where $\sigma^2(Q)= \mathrm{Var}[|\{x\in Q \mid h(x) =0\}|]$ is the variance of the random variable $Q(h)$, and $\epsilon(n,m,q,f)$ is from Definition~\ref{def:wstar-etc}. 
From Cantelli's inequality
\[
\Pr[Q(h)>0] \geq 1- \frac{\sigma^2(Q)}{\sigma^2(Q)+\left(\frac{|Q|}{2^m}\right)^2} \geq 1- \frac{v(q)}{v(q)+\left(\frac{q}{2^m}\right)^2}
\]
We claim that
\[
\frac{v(q)}{v(q)+\left(\frac{q}{2^m}\right)^2}
\]
which gives a lower bound on the shattering probability, is a decreasing function of $q$. By dividing numerator and denominator by $v(q)$, it is sufficient to show that
\[
\frac{\left(\frac{q}{2^m}\right)^2}{v(q)}
\]
is increasing in $q$, which follows from Lemma \ref{lemma:inc}. To prove the Lemma, suppose by contradiction that
\[
|S| > \min \left\{x \bigm \vert 1 - \frac{v(x)}{v(x)+\left(\frac{x}{2^m}\right)^2}  > \Pr\left[S(h)>0\right]\right\}
\]
Since $\frac{v(q)}{v(q)+\left(\frac{q}{2^m}\right)^2}$ is a decreasing function of $q$, and $|S|$ is assumed to be larger than the smallest element in the set above, it holds that
\begin{equation}
\label{eq:cont}
1 - \frac{v(|S|)}{v(|S|)+\left(\frac{|S|}{2^m}\right)^2} > \Pr\left[S(h)>0\right]
\end{equation}
From Cantelli's inequality
\[
\Pr[S(h)>0] \geq 1- \frac{\sigma^2(S)}{\sigma^2(S)+\left(\frac{|S|}{2^m}\right)^2} \geq 1- \frac{v(|S|)}{v(|S|)+\left(\frac{|S|}{2^m}\right)^2}
\]
where the second inequality holds because $v(|S|)$ upper bounds the true variance $\sigma^2(S)$. The last inequality contradicts eq.~(\ref{eq:cont}).
\end{proof}

\begin{proof}[Proof of Theorem~\ref{thm:UB}]
There are two possibilities for the unknown value $|S|$: either $|S|\leq U(n,m,f)$ or $|S| > U(n,m,f)$. Note that, by definition, $|S| \leq 2^n$.

Case 1: $|S|\leq U(n,m,f)$. In this case, $|S| \leq \min\{U(n,m,f), 2^n\} \leq \mathcal{A}(S,h_1, \cdots, h_T)$ for all realizations of the random variables $h_1, \cdots, h_T$. Thus, eq.~(\ref{eqn:ub}) holds trivially for any $\Delta \geq 0$.

Case 2: $|S| > U(n,m,f)$. In this case, rearranging terms, we obtain
\[ 
|S| > \min \left\{z \bigm \vert 1 - \frac{v(z)}{v(z)+\left(\frac{z}{2^m}\right)^2}  \geq \frac{3}{4}\right\}
\]
By the monotonicity of $\frac{v(q)}{v(q)+\left(\frac{q}{2^m}\right)^2}$ shown in the proof of Lemma ~\ref{upperb:lemma}, it holds that
\[
1 - \frac{v(|S|)}{v(|S|)+\left(\frac{|S|}{2^m}\right)^2} \geq \frac{3}{4}
\]
From Cantelli's inequality 
\begin{eqnarray*}
\Pr[S(h)>0] &\geq& 1- \frac{\mathrm{Var}(S(h))}{\mathrm{Var}(S(h))+\left(\frac{|S|}{2^m}\right)^2} \\
&\geq& 1- \frac{v(|S|)}{v(|S|)+\left(\frac{|S|}{2^m}\right)^2}\\
&\geq& \frac{3}{4}
\end{eqnarray*}
where the second inequality is derived observing that $v(|S|)$ is an upper bound on the true variance $\mathrm{Var}(S(h))$. 
Thus $\Pr[S(h)=0] \leq \frac{1}{4}$, 
and from Chernoff's bound
\begin{eqnarray*}
\Pr[\mathrm{Median}(\I[S(h_1)=0], \cdots, \I[S(h_T)=0])=1] \leq\\
\exp\left(-T/24\right) \leq \Delta
\end{eqnarray*}
By the definition of $\mathcal{A}(S,h_1, \cdots, h_T)$, we therefore have
\[
\Pr\left[\mathcal{A}(S,h_1, \cdots, h_T) = 2^n\right] \geq 1 - \Delta
\]
Since $|S| \leq 2^n$, eq.~(\ref{eqn:ub}) follows immediately.
\end{proof}

\section{Proof of Theorem~\ref{thm:asymp}}

Theorem~\ref{thm:asymp} contains three statements:
\begin{enumerate}
\item Let $\epsilon > 0$. $\kappa > 1$. There exists $M_\kappa$ such that for all $m \geq M_\kappa$:
\[ \minf_\epsilon(m,n) > \frac{\log{m}}{\kappa\, m}\]

\item Let $\epsilon \in (0,\frac{3}{10}), \alpha \in (0,1),$ and $m = \alpha n$. There exists $N$ such that for all $n \geq N$:
\[ \minf_\epsilon(m,n) \leq \left(\kbound\right) \frac{\log{m}}{m} \]

\item Let $\epsilon \in (0,\frac{3}{10}), \alpha, \kappa > 1, \beta \in (0,1),$ and $m = \alpha n^\beta$. There exists $N_\kappa$ such that for all $n \geq N_\kappa$:
\[ \minf_\epsilon(m,n) \leq \frac{\kappa\, (1-\beta)}{2\beta}\frac{\log^2{m}}{m} \]
\end{enumerate}

We will prove them in turn in the following subsections. The arguments will often use the inequality $1+x \leq \exp(x)$ which holds for any $x \in \mathbb{R}$. We also use the expression "for sufficiently large n" to mean the more formal statement 
\[ \exists N > 0, \forall n > N \]

\subsection{Part I: Lower Bound}

\begin{proof}[Proof of Theorem~\ref{thm:asymp} (part 1)]
Since the minimum constraint density must work for every set $S$, it must also work for the hypercube $S_c = \{0,1\}^{m+c} \times \{0\}^{n-m-c}$ with $2^{m+c}$ elements. This is a set where the first $m+c$ variables are ``free'', while the remaining $n-m-c$ are fixed to $0$.

Let $h$ be a hash function drawn from $\mathcal{H}_{m \times n}^f$. Let's consider a parity constraint of the form $a_{i1}x_1 \oplus \cdots \oplus a_{in}x_n = b_i$ as in Definition \ref{def:sparse-hash}. If $a_{i1} = a_{i2} = ... = a_{i(m+c)} = 0$ and $b_i = 1$, then $\{x \in S_c : a x =b \bmod 2\} = \emptyset$. If the constraint is constructed as in Definition \ref{def:sparse-hash}, this happens with probability $\half (1 - f)^{m+c}$. Accumulating this probability over $m$ independent parity constraints and setting $f = \frac{\log m}{\kappa\,m}$ for any $\kappa > 1$, we obtain:
\begin{align*}
\Pr[S_c(h) > 0]
  & \leq \left(1 - \half (1 - f)^{m+c}\right)^m \\
  & = \left(1 - \half \left(1 - \frac{\log m}{\kappa m}\right)^{m+c}\right)^m
\end{align*}
%
For any $\lambda > 1$, it can be verified from the Taylor expansion of the exponential function that for any small enough $x > 0$, $1 - x \geq \exp(-\lambda x)$.
Observe that for any fixed $\kappa>1$, $1 - \frac{\log m}{\kappa\,m} > 0$ as long as $m$ is large enough. It follows that for any $\gamma > 1$, there exists an $M_{\kappa,\gamma}$ such that for all $m \geq M_{\kappa,\gamma}$, the above expression is upper bounded by:
\begin{align*}
  & \left(1 - \half \exp\left(-\gamma \frac{\log m}{\kappa\,m} (m+c)\right)\right)^m \\
  & = \left(1 - \half m^{-\gamma (m+c)/(\kappa\,m)}\right)^m \\
  & \leq \exp\left(- \half m^{1-\gamma (m+c)/(\kappa m)} \right)
\end{align*}
where the last inequality follows from $1 + x \leq \exp(x)$. Since $\kappa > 1$, we can choose $\gamma$ such that $1<\gamma < \kappa$. In this case, for large enough $m$, the last expression above is less than $1/2$. In other words, there exists an $M_\kappa$ such that for all $m \geq M_\kappa$, $\Pr[S_c(h) > 0] < 1/2$. It follows that for all such $m$, the minimum constraint density, $\minf(m,n)$, must be larger than $\frac{\log m}{\kappa\,m}$, finishing the proof.
\end{proof}
\subsection{Part II: Upper bound when $m = \Theta(n)$}

To prove the upper bound for $m = \Theta(n)$,  we will need to first establish a few lemmas. In all the proofs below we will assume $m = \alpha n$, with $\alpha$ constant with respect to $n$. Let us denote the binary entropy function as $H(p) \triangleq -p \log_2 p -(1-p) \log_2 (1-p)$. It is well known that $H(0)=0$, $H(\half)=1$, and it is monotonically increasing in the interval $[0,\half]$. We use the following relationship between the sum of binomials and the binary entropy function:
\begin{proposition}[\citeauthor{codlect} \citeyear{codlect}, Lemma 5]
\label{prop:binomial-sum-entropy}
For any $n \in \mathbb{N}$ and $\lambda \in [0,\half]$,
\[ \sum_{j=0}^{\lambda n} {n \choose j} \leq 2^{H(\lambda)\,n}  \]
\end{proposition}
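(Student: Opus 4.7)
The plan is to invoke a standard two-step entropy argument: expand the constant $1$ via the binomial theorem applied to the distribution $\mathrm{Bin}(n, \lambda)$, then use the monotonicity of $\lambda^j(1-\lambda)^{n-j}$ over the range $0 \leq j \leq \lambda n$ to factor out a uniform lower bound and isolate the partial sum of binomial coefficients.

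First I would start from the identity
\[ 1 \;=\; \bigl(\lambda + (1-\lambda)\bigr)^n \;=\; \sum_{j=0}^{n} \binom{n}{j} \lambda^j (1-\lambda)^{n-j}, \]
which already upper-bounds any partial sum of the (nonnegative) summands on the right by $1$. Second, I would exploit the hypothesis $\lambda \leq 1/2$: the ratio $\lambda/(1-\lambda) \leq 1$, so the factor $\lambda^j (1-\lambda)^{n-j}$ is nonincreasing in $j$ and therefore, for every integer $j$ with $0 \leq j \leq \lfloor \lambda n \rfloor$, is at least $\lambda^{\lambda n}(1-\lambda)^{(1-\lambda) n}$.

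Factoring this uniform lower bound out of the partial sum gives
\[ 1 \;\geq\; \sum_{j=0}^{\lfloor \lambda n \rfloor} \binom{n}{j} \lambda^j (1-\lambda)^{n-j} \;\geq\; \lambda^{\lambda n}(1-\lambda)^{(1-\lambda)n} \sum_{j=0}^{\lfloor \lambda n \rfloor} \binom{n}{j}, \]
and rearranging yields
\[ \sum_{j=0}^{\lfloor \lambda n \rfloor} \binom{n}{j} \;\leq\; \lambda^{-\lambda n} (1-\lambda)^{-(1-\lambda) n} \;=\; 2^{-\lambda n \log_2 \lambda - (1-\lambda) n \log_2 (1-\lambda)} \;=\; 2^{H(\lambda)\,n}, \]
where the final equality is just the definition of the binary entropy. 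Writing the sum as $\sum_{j=0}^{\lambda n}$ (with the convention $\lambda n := \lfloor \lambda n \rfloor$ when $\lambda n$ is non-integer) recovers the stated bound.

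There is essentially no hard step here — the only minor care point is the non-integer case $\lambda n \notin \mathbb{Z}$, which is handled simply by reading the upper summation index as $\lfloor \lambda n \rfloor$ and observing that the monotonicity argument still delivers the same numerical lower bound $\lambda^{\lambda n}(1-\lambda)^{(1-\lambda)n}$ for every integer $j$ in the range. The boundary cases $\lambda = 0$ (where both sides equal $1$, using $0 \log 0 = 0$) and $\lambda = 1/2$ (where the bound becomes $2^n$, trivially correct) provide useful sanity checks. Since the result is classical, I would most likely state it with a citation to \citet{codlect} rather than reproducing the three-line derivation in full.
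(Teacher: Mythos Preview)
Your argument is correct and is exactly the standard entropy-method proof of this classical bound. The paper itself does not prove the proposition at all: it simply states it with a citation to \citeauthor{codlect}~\shortcite{codlect}, Lemma~5, which is precisely what you suggested you would do in your final sentence.
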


\begin{lemma}
\label{lemma:binom_bound}
Let $\alpha \in (0,1)$, there exists a unique $\lambda^* < \half$ such that $H(\lambda^*) = \alpha$, where $H$ is the binary entropy function. For all $\lambda < \lambda^*$,
\[ \limn \frac{\sum_{j=1}^{\lambda n} \binom{n}{j}}{2^{\alpha n}} = 0 \]
\end{lemma}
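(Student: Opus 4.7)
The plan is to combine two standard ingredients: the intermediate value theorem applied to the binary entropy function, and the entropy-based bound on a partial sum of binomials (Proposition~\ref{prop:binomial-sum-entropy}) stated just above. The whole argument should be short; the only mild subtlety is making sure the hypothesis $\lambda < \tfrac{1}{2}$ of Proposition~\ref{prop:binomial-sum-entropy} is in force when we apply it, which is automatic since $\lambda < \lambda^* < \tfrac{1}{2}$.

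First I would establish the existence and uniqueness of $\lambda^*$. The function $H$ is continuous on $[0,\tfrac{1}{2}]$, strictly increasing on that interval, with $H(0)=0$ and $H(\tfrac{1}{2})=1$. Since $\alpha \in (0,1)$, the intermediate value theorem produces a unique $\lambda^* \in (0,\tfrac{1}{2})$ with $H(\lambda^*)=\alpha$. This gives the first claim of the lemma.

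Next I would bound the numerator. Fix any $\lambda < \lambda^*$. Since $\lambda < \tfrac{1}{2}$, Proposition~\ref{prop:binomial-sum-entropy} applies and yields
\[
\sum_{j=1}^{\lambda n}\binom{n}{j} \;\leq\; \sum_{j=0}^{\lambda n}\binom{n}{j} \;\leq\; 2^{H(\lambda)\,n}.
\]
Dividing by $2^{\alpha n}$ gives
\[
\frac{\sum_{j=1}^{\lambda n}\binom{n}{j}}{2^{\alpha n}} \;\leq\; 2^{(H(\lambda)-\alpha)\,n}.
\]
Because $H$ is strictly increasing on $[0,\tfrac{1}{2}]$ and $\lambda < \lambda^*$, we have $H(\lambda) < H(\lambda^*) = \alpha$, so the exponent $H(\lambda)-\alpha$ is a strictly negative constant (independent of $n$). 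Hence the right-hand side tends to $0$ as $n \to \infty$, which proves the limit.

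The only step that requires any thought at all is the comparison $H(\lambda) < \alpha$, which relies on the strict monotonicity of $H$ on $[0,\tfrac{1}{2}]$; this is a standard property of the binary entropy function and can be verified by differentiating. Otherwise the proof is a direct plug-in of Proposition~\ref{prop:binomial-sum-entropy}, so I do not expect any real obstacle.
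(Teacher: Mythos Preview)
Your proposal is correct and follows essentially the same approach as the paper: invoke monotonicity of $H$ on $[0,\tfrac{1}{2}]$ to get the unique $\lambda^*$, apply Proposition~\ref{prop:binomial-sum-entropy} to bound the numerator by $2^{H(\lambda)n}$, and conclude from $H(\lambda)<\alpha$ that the ratio is at most $2^{(H(\lambda)-\alpha)n}\to 0$. The paper's proof is virtually identical, with the minor addition of noting that nonnegativity of numerator and denominator pins the limit at zero.
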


\begin{proof}
We can always find a unique $\lambda^* < \half$ such that $H(\lambda^*)=\alpha$. This is because $H(\lambda)$ increases monotonically from 0 to 1 as $\lambda$ increases from 0 to $\half$, so $H^{-1}(\alpha)$ takes one and only one value in the range $(0, 1/2)$. 

Furthermore, due to monotonicity, $H(\lambda) < \alpha=H(\lambda^*)$ for all $\lambda < \lambda^*$. From Proposition~\ref{prop:binomial-sum-entropy}, for any $\lambda < \half$, the sum of binomials in the numerator of the desired quantity is at most $2^{H(\lambda)\, n}$. Hence, the fraction is at most $2^{(H(\lambda)-\alpha)\, n}$, which approaches $0$ as $n$ increases because $H(\lambda) < \alpha$. Since numerator and denominator are non-negative, the limit is zero and this concludes the proof.
\end{proof}

\begin{corollary}
\label{cor:wasymp}
Let $\alpha \in (0,1), c \geq 2, w^*(n,q)$ be as in Definition~\ref{def:wstar-etc},
and $\lambda^* < \half$ be such that $H(\lambda^*) = \alpha$. Then for all $\lambda < \lambda^*$, and any $n$ sufficiently large
\[ w^*(n,2^{m+c}) = w^*(n,2^{\alpha n+c}) \geq \lambda n \]
\end{corollary}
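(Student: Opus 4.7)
The plan is to derive this corollary almost directly from Lemma~\ref{lemma:binom_bound}. Unwinding the definition of $w^\ast(n,q)$ from Definition~\ref{def:wstar-etc}, the inequality $w^\ast(n,2^{\alpha n+c}) \geq \lambda n$ is equivalent to the bound
\[ \sum_{j=1}^{\lambda n} \binom{n}{j} \leq 2^{\alpha n + c} - 1. \]
So my goal reduces to showing that, for any $\lambda < \lambda^\ast$ and all sufficiently large $n$, the left-hand side is dominated by $2^{\alpha n+c}-1$.

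First I would appeal to Lemma~\ref{lemma:binom_bound}: since $\lambda < \lambda^\ast$ and $H(\lambda^\ast)=\alpha$, the ratio $\sum_{j=1}^{\lambda n}\binom{n}{j} / 2^{\alpha n}$ tends to $0$ as $n \to \infty$. In particular, for $n$ large enough, this ratio is at most $1$, so $\sum_{j=1}^{\lambda n} \binom{n}{j} \leq 2^{\alpha n}$. Since $c \geq 2$, we have $2^{\alpha n + c} - 1 \geq 4 \cdot 2^{\alpha n} - 1 > 2^{\alpha n}$, so the required inequality holds.

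Combining these two steps, for every sufficiently large $n$ the set in the definition of $w^\ast$ contains $\lambda n$, which gives $w^\ast(n, 2^{\alpha n + c}) \geq \lambda n$ as desired. There is no real obstacle here — the entropy-based bound on partial sums of binomial coefficients already packaged in Lemma~\ref{lemma:binom_bound} does all the heavy lifting; the only mild care needed is to confirm that the additive $c$ (together with the $-1$ in the definition of $w^\ast$) does not cause trouble, which is immediate since $c \geq 2$ absorbs it with room to spare.
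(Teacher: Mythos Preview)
Your proof is correct and follows essentially the same route as the paper: invoke Lemma~\ref{lemma:binom_bound} to get $\sum_{j=1}^{\lambda n}\binom{n}{j} < 2^{\alpha n}$ for large $n$, then use $c \geq 2$ to conclude this is at most $2^{\alpha n+c}-1$, so $\lambda n$ lies in the defining set of $w^*$. The only difference is cosmetic---you are slightly more explicit about why the $-1$ in the definition of $w^*$ causes no trouble.
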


\begin{proof}[Proof of Corollary~\ref{cor:wasymp}]
By Lemma~\ref{lemma:binom_bound}, for all $\lambda < H^{-1}(\alpha)$, when n is sufficiently large, 
\[ \sum_{j = 1}^{\lambda n}\binom{n}{j} < 2^{\alpha n} < 2^{m+c} - 1\]
Thus, it follows immediately from the definition of $w^*$ that for sufficiently large $n$, $\lambda n \leq w^*(n, 2^{m+c})$.
\end{proof}

\begin{remark}
Corollary~\ref{cor:wasymp}, together with the trivial fact that $w^*(n,q) \leq n$, implies $w^*(n,q) = \Theta(n)$ when $m=\alpha n$ and $q=2^{m+c}$.
\end{remark}

\begin{lemma}
\label{lemma:convex}
For all $\delta > 0$ and $w \in \mathbb{R}$, the function $f_\delta(w) = \log{(1 + \delta^w)} $ is convex. \par
\end{lemma}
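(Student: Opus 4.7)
The plan is to establish convexity by computing the second derivative and showing it is nonnegative for all $w \in \mathbb{R}$ and all $\delta > 0$. Since $\delta > 0$, we can write $\delta^w = e^{w \ln \delta}$, so $f_\delta(w) = \log(1 + e^{w \ln \delta})$. This exhibits $f_\delta$ as the composition of the scalar function $\phi(x) = \log(1 + e^x)$ (the softplus function) with the affine map $w \mapsto (\ln \delta)\, w$, which immediately suggests two essentially equivalent routes.

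The first route is a direct calculation. Letting $u = u(w) = \delta^w$ so that $u'(w) = (\ln \delta)\, u$, a quick application of the quotient rule gives
\[
f'_\delta(w) = \frac{(\ln \delta)\, u}{1+u}, \qquad f''_\delta(w) = \frac{(\ln \delta)^2\, u}{(1+u)^2} = \frac{(\ln \delta)^2\, \delta^w}{(1+\delta^w)^2}.
\]
Since $\delta^w > 0$ and $(\ln \delta)^2 \geq 0$, this is manifestly nonnegative, so $f_\delta$ is convex (strictly so unless $\delta = 1$, in which case $f_\delta \equiv \log 2$ is trivially convex). The log in the statement is up to a positive constant, so the base (natural vs.\ binary) does not affect the sign of $f''_\delta$.

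The alternative route is to invoke the well-known fact that $\phi(x) = \log(1 + e^x)$ is convex, either by the same second-derivative calculation ($\phi''(x) = e^x/(1+e^x)^2 \geq 0$) or by recognizing it as the cumulant generating function of a Bernoulli random variable, and then to appeal to the standard fact that precomposition of a convex function with an affine map preserves convexity. This avoids any case analysis on the sign of $\ln \delta$. I expect no real obstacle here: both routes are short and self-contained, and the only thing one needs to be slightly careful about is the degenerate case $\delta = 1$, which yields a constant function and is handled trivially by either approach.
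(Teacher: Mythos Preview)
Your proposal is correct, and your first route---computing $f_\delta''(w) = (\ln\delta)^2\,\delta^w/(1+\delta^w)^2 \geq 0$---is exactly the argument the paper gives. The softplus-plus-affine-precomposition alternative is a pleasant reformulation but not needed here.
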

\begin{proof}
We will show that the second derivative of $f_\delta(w)$ is non-negative:
\[ \begin{aligned} 
	f_\delta'(w) &= \frac{\delta^w \log{\delta}}{1 + \delta^w} \\
    f_\delta''(w) &= \frac{\delta^w(1+\delta^w)(\log{\delta})^2 - \delta^{2w}(\log{\delta})^2}{(1+\delta^w)^2} \\
    &= \frac{\delta^w(\log{\delta})^2}{(1+\delta^w)^2} \geq 0
    \end{aligned}
\]
It follows that $f_\delta(w)$ is convex.
\end{proof}

\begin{lemma}
\label{lemma:ubound}
Let $t > 0, 0<\delta<1, k > -t\frac{\log{(\frac{2}{1+\delta} - 1)}}{\log{(1+\delta)}},$ and $w \geq 0$. Then for all $m$ sufficiently large,
\begin{equation}
\label{finallimitnn}
\frac{\left(\half + \half \left(1 - \frac{k \log{m}}{m} \right)^w \right)^m}{m^{-tw} + (1+\delta)^{-m}} < 1
\end{equation}
\end{lemma}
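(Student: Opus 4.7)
The plan is to replace the base $\bigl(1 - \tfrac{k\log m}{m}\bigr)^w$ by its exponential upper bound $\exp(-pw)$, where $p := \tfrac{k\log m}{m}$, and then split into two regimes of $w$ separated by a carefully chosen threshold. After this substitution, the numerator is dominated by $\bigl(\tfrac{1+e^{-\rho}}{2}\bigr)^m$ with $\rho := pw$, a quantity that is monotonically decreasing in $\rho$. The hypothesis ``$m$ sufficiently large'' is needed only to ensure $p<1$, so that $1-p$ is a positive base.

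The key threshold is $\rho_{\max} := \log\tfrac{1+\delta}{1-\delta}$, chosen so that $\tfrac{1+e^{-\rho_{\max}}}{2} = \tfrac{1}{1+\delta}$; note that this is exactly where the quantity $\tfrac{2}{1+\delta}-1 = \tfrac{1-\delta}{1+\delta}$ in the hypothesis enters. In the ``large-$w$'' regime $\rho \geq \rho_{\max}$, monotonicity immediately yields that the numerator is at most $(1+\delta)^{-m}$, which is strictly less than the denominator $m^{-tw}+(1+\delta)^{-m}$.

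The interesting regime is $0 < \rho < \rho_{\max}$, where I would compare directly against $m^{-tw}$. Taking logarithms, it suffices to establish $g(\rho) := \log\tfrac{1+e^{-\rho}}{2} + \tfrac{t\rho}{k} < 0$ on that interval. I would verify three facts: $g(0)=0$; $g$ is strictly convex (a one-line computation gives $g''(\rho) = e^{\rho}/(e^{\rho}+1)^2 > 0$); and $g(\rho_{\max}) < 0$, which after plugging in simplifies to precisely $k > \tfrac{t\log\frac{1+\delta}{1-\delta}}{\log(1+\delta)}$, i.e.\ exactly the hypothesis on $k$. Convexity then forces $g(\rho)$ to lie on or below the secant from $(0,0)$ to $(\rho_{\max}, g(\rho_{\max}))$, which is strictly negative on $(0,\rho_{\max}]$. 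The boundary case $w=0$ is checked directly: the numerator equals $1$ while the denominator is $1+(1+\delta)^{-m}>1$.

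The main obstacle is recognizing why the somewhat opaque hypothesis on $k$ is exactly the right threshold: it is tuned so that $g(\rho_{\max})$ is negative at the precise $\rho$ where the two denominator terms $m^{-tw}$ and $(1+\delta)^{-m}$ balance. Once this structural insight is in place, convexity of $g$ propagates the bound from this single endpoint to the full interior regime, and the rest of the argument reduces to routine one-variable analysis.
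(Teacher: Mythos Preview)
Your proposal is correct and is essentially the paper's own argument: bound $(1-p)^w$ by $e^{-pw}$, split into a small-$w$ regime handled via convexity (function below the secant) against $m^{-tw}$, and a large-$w$ regime handled via monotonicity against $(1+\delta)^{-m}$. The only cosmetic differences are your change of variable to $\rho = pw$ and your choice of threshold $\rho_{\max}$ (where the bounded numerator first equals $(1+\delta)^{-m}$), whereas the paper uses $w_0$ (where the two denominator terms balance, corresponding to $\rho_0 = \tfrac{k}{t}\log(1+\delta)$); note that your final-paragraph remark misidentifies $\rho_{\max}$ as that balancing point---in fact the hypothesis on $k$ is precisely the condition $\rho_0 > \rho_{\max}$, but this does not affect the validity of your argument.
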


Lemma~\ref{lemma:ubound} is an attempt to simplify the expression below, which we will call $\zeta(w)$ and make its dependence on $k$ and $m$ implicit.
\begin{align*}
\zeta(w) = \zeta(w, k, m) = \left(\half + \half \left(1 - \frac{k \log{m}}{m} \right)^w \right)^m
\end{align*}
Note that for large enough $m$ such that $\frac{k \log m}{m} \leq 1$, $\zeta(w)$ is monotonically non-increasing in $w$, a property we will use.
This term is too complex to study in detail, therefore, we upper bound it by the sum of two simpler expressions. The intuition for this bound is shown in Figure~\ref{fig:lemma_ubound}.

\begin{figure}[h]
    \centering
    \includegraphics[width=\columnwidth]{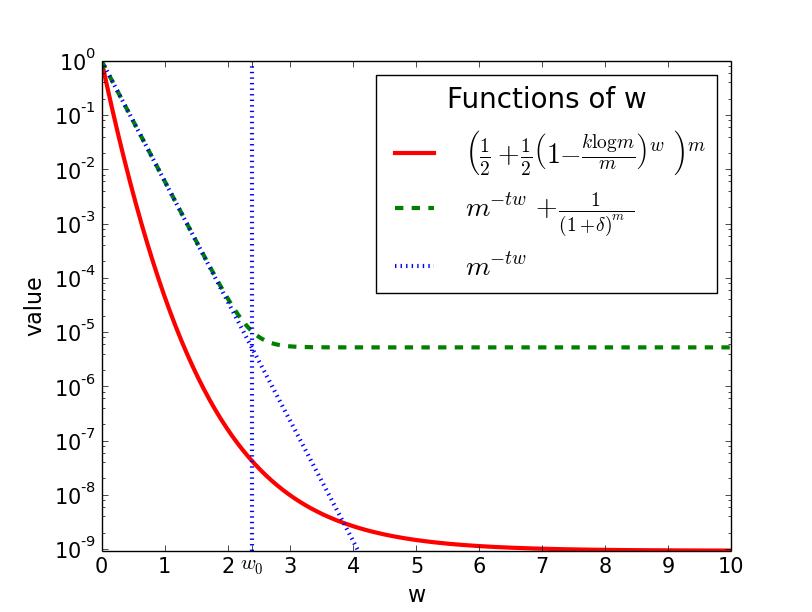}
    \caption{Graphical representation of Lemma~\ref{lemma:ubound}}
    \label{fig:lemma_ubound}
\end{figure}

Lemma~\ref{lemma:ubound} can thus be restated as claiming $\zeta(w) \leq m^{-tw} + (1+\delta)^{-m}$ for $m$ sufficiently large.
Towards this end, when $w < w_0$, we show that $m^{-tw}$ is the dominant term and that $\zeta(w) < m^{-tw}$. When $w > w_0$, we show that the term $(1+\delta)^{-m}$ dominates and that $\zeta(w) < (1+\delta)^{-m}$. Combining these two regimes, we deduce that $\zeta(w)$ must be upper bounded by their sum for all values of $w$. A formal proof follows.

\begin{proof}
We will show that for $m$ sufficiently large, $\zeta(w) \leq m^{-tw} + (1+\delta)^{-m}$. Assume w.l.o.g.\ that $m$ is large enough to satisfy:
\begin{align*}
1 - k\frac{\log{m}}{m} \geq 0 \numberthis \label{equ:pos_cond} \\
\end{align*}

Next we consider the location $w_0$ where the dominant term of $m^{-tw} + (1+\delta)^{-m}$ switches from $m^{-tw}$ to $(1+\delta)^{-m}$. This is where 
\[ m^{-tw_0} = (1+\delta)^{-m} \]
which gives us $w_0 = \frac{m\log{(1 + \delta)}}{t\log{m}}$. At this $w_0$ we have
\[ \begin{aligned}
	& \frac{\zeta(w_0)}{(1+\delta)^{-m}} \\
    & \leq \frac{(1 + \delta)^{m}}{2^m} \left( 1 + \left(1 - \frac{k\log{m}}{m}\right)^{w_0} \right)^m \\
    & \leq \frac{(1 + \delta)^{m}}{2^m} \left(1 + \exp\left(-\frac{k \log m}{m} \, \frac{m\log{(1 + \delta)}}{t\log{m}}\right) \right)^m \\
    &= \frac{(1 + \delta)^{m}}{2^m} \left(1 + \exp\left(-\frac{k}{t}\log{(1 + \delta)}\right)\right)^m \\
    &= \frac{(1 + \delta)^{m}}{2^m} \left(1 + (1 + \delta)^{-\frac{k}{t}}\right)^m \\
    &= \left( \frac{(1 + \delta)(1 + (1 + \delta)^{-\frac{k}{t}})}{2} \right)^m \\
    \end{aligned}     
    \]
Clearly if we choose $k$ such that $\frac{(1 + \delta)(1 + (1 + \delta)^{-\frac{k}{t}})}{2} < 1$, then the entire expression is less then $1$. This condition is satisfied if: 
\[ k > -t\frac{\log{(\frac{2}{1+\delta} - 1)}}{\log{(1+\delta)}} \]
Recall our earlier observation that $\zeta(w)$ is monotonically non-increasing in $w$ for large enough $m$. Thus, for $m$ sufficiently large and any $w \geq w_0$, we have $\zeta(w) \leq \zeta(w_0) < (1+\delta)^{-m} \leq m^{-tw} + (1+\delta)^{-m}$.

Finally, let's consider the case where $w  w_0$, again assuming $m$ sufficiently large so that $\frac{k \log m}{m} < 1$. Notice that \[
\log{\zeta(w)} = m\, \log \left(\half + \half \left(1 - \frac{k \log{m}}{m} \right)^w \right)
\]
is convex with respect to $w$ because of Lemma \ref{lemma:convex}. 
We have that for all positive $m$:
\[ \log{\zeta(0)} = \log(1) = \log{ \left[ m^{-t0} \right] } \]
and for all $m$ sufficiently large:
\[ \log{\zeta(w_0)} < \log{\left[(1 + \delta)^{-m}\right]} = \log{\left[m^{-t w_0}\right]} \]
where the inequality is from the above analysis and the equality is by definition of $w_0$.

For $w \in [0,w_0]$, we can write $w=(1-\lambda)0+\lambda w_0$ for some $\lambda \geq 0$. Therefore, for such $w$ and for $m$ sufficiently large, by convexity of $\log \zeta(w)$:
\[ \begin{aligned}
	\log\zeta(w) &\leq (1-\lambda)\log\zeta(0) + \lambda\log\zeta(w_0) \\
    &\leq (1-\lambda)\log[m^{-t0}] + \lambda \log{[m^{-tw_0}]} \\
    &= \log{\left[ m^{-tw} \right]} 
	\end{aligned} \]
i.e., $\zeta(w) \leq m^{-tw} < m^{-tw} + (1+\delta)^{-m}$, as desired.
\end{proof}

\begin{lemma}
\label{lemma:partial}
Let $\alpha, \delta \in (0,1), k > -\frac{\log{\left(\frac{2}{1+\delta} - 1\right)}}{\log{\left(1+\delta\right)}}$, and $\lambda^* < \half$ be such that $H(\lambda^*) = \alpha \log_2{(1+\delta)}$. Then, for all $\lambda < \lambda^*$,
\[ \limn \sum_{w = 1}^{\lambda n} \binom{n}{w} \frac{1}{2^m} \left(1 + \left(1 - 2\frac{k\log{m}}{m}\right)^w\right)^m = 0 \] 
\end{lemma}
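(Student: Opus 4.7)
The plan is to recognize the summand as an instance of the function $\zeta(w,k',m)$ from Lemma~\ref{lemma:ubound} (with the substitution $k'=2k$), apply that lemma with $t=2$, and then handle the two pieces of the resulting bound separately. Concretely, for every $w$,
\[
\frac{1}{2^m}\bigl(1+(1-2k\log m/m)^w\bigr)^m
=\Bigl(\tfrac{1}{2}+\tfrac{1}{2}\bigl(1-\tfrac{(2k)\log m}{m}\bigr)^w\Bigr)^m
=\zeta(w,2k,m).
\]
The hypothesis $k>-\log(\tfrac{2}{1+\delta}-1)/\log(1+\delta)$, multiplied by $2$, rearranges to $2k>-2\log(\tfrac{2}{1+\delta}-1)/\log(1+\delta)$, which is exactly the requirement of Lemma~\ref{lemma:ubound} for the parameter value $k'=2k$ with $t=2$. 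Hence, for every $w\ge 0$ and all sufficiently large $m$ (equivalently, via $m=\alpha n$, all sufficiently large $n$),
\[
\zeta(w,2k,m)\ \le\ m^{-2w}+(1+\delta)^{-m}.
\]

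Substituting this bound into the target sum yields the split
\[
\sum_{w=1}^{\lambda n}\binom{n}{w}\zeta(w,2k,m)
\ \le\ \underbrace{\sum_{w=1}^{\lambda n}\binom{n}{w}m^{-2w}}_{S_1(n)}
\ +\ (1+\delta)^{-m}\underbrace{\sum_{w=1}^{\lambda n}\binom{n}{w}}_{S_2(n)}.
\]
For $S_1(n)$, the binomial theorem gives $S_1(n)\le(1+m^{-2})^n-1\le\exp(n/m^2)-1=\exp(1/(\alpha^2 n))-1$, which tends to $0$ as $n\to\infty$. For the second piece, since $\lambda<\lambda^*<\tfrac{1}{2}$, Proposition~\ref{prop:binomial-sum-entropy} gives $S_2(n)\le 2^{H(\lambda)n}$, so the second contribution is at most $(1+\delta)^{-\alpha n}\cdot 2^{H(\lambda)n}=2^{(H(\lambda)-\alpha\log_2(1+\delta))n}$. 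By the strict monotonicity of $H$ on $[0,\tfrac{1}{2}]$ and the defining property $H(\lambda^*)=\alpha\log_2(1+\delta)$, the exponent is strictly negative whenever $\lambda<\lambda^*$, so this factor also tends to $0$.

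Combining the two estimates sandwiches the original sum between $0$ and a quantity tending to $0$, proving the lemma. The one subtlety I anticipate is the factor-of-$2$ bookkeeping: the condition on $k$ stated here is exactly twice what Lemma~\ref{lemma:ubound} would nominally require after the substitution $k'=2k$, which provides just enough slack to take $t=2$. In principle any $t>1$ would suffice to drive $S_1(n)\to 0$ in the regime $m=\alpha n$; choosing $t=2$ is natural because it saturates the available slack while keeping the algebra clean. Everything else reduces to routine application of the binomial theorem, the entropy bound in Proposition~\ref{prop:binomial-sum-entropy}, and monotonicity of $H$.
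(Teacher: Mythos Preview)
Your proof is correct and follows essentially the same approach as the paper: apply Lemma~\ref{lemma:ubound} to bound the summand by $m^{-tw}+(1+\delta)^{-m}$, then control the two resulting pieces separately via an exponential/binomial estimate and the entropy bound (the paper invokes Lemma~\ref{lemma:binom_bound}, which is itself proved from Proposition~\ref{prop:binomial-sum-entropy}). The only cosmetic differences are that you make the $k'=2k$ substitution explicit and fix $t=2$, whereas the paper (slightly loosely) writes the Lemma~\ref{lemma:ubound} condition in terms of $k$ and then chooses $t>1$ close to $1$, and you bound $S_1$ via $(1+m^{-2})^n-1$ rather than the paper's $\binom{n}{w}\le n^w/w!$ route.
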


\begin{proof}
By Lemma~\ref{lemma:ubound}, we can select any $0 < \delta < 1$, $t > 1$ and $k > -t\frac{\log{(\frac{2}{1+\delta} - 1)}}{\log{(1+\delta)}}$ so that when n (or equivalently $m = \alpha n$) is sufficiently large, 
\[ \begin{aligned}
    \sum_{w = 1}^{\lambda n} & \binom{n}{w} \frac{1}{2^m} \left(1 + \left(1 - 2\frac{k\log{m}}{m}\right)^w\right)^m \\ 
    & \leq \sum_{w = 1}^{\lambda n} \binom{n}{w} \left( m^{-tw} + (1+\delta)^{-m} \right) \\
    & \leq \sum_{w = 1}^{\lambda n} \frac{n^w}{w!}{(\alpha n)}^{-tw} + \frac{\sum_{w = 1}^{\lambda n} \binom{n}{w}}{(1 + \delta)^m} 
    \end{aligned} \]
    where we used the inequality $\binom{n}{w} \leq \frac{n^w}{w!}$ for all $n \in \mathbb{N}^*$ and $0 \leq w \leq n$.
The first term of the sum can be driven to zero because when we choose any $t > 1$
\[ \begin{aligned}
	\limn \sum_{w = 1}^{\lambda n} \frac{n^w}{w!}{(\alpha n)}^{-tw} 
    &= \limn \sum_{w = 1}^{\lambda n} \frac{{\alpha}^{-tw}n^{(1-t)w}}{w!} \\
    &\leq \limn \sum_{w = 0}^{\infty} \frac{{\alpha}^{-tw}n^{(1-t)w}}{w!} - \frac{{\alpha}^0n^0}{0!}  \\
	&= \limn e^{\alpha^{-t}n^{1-t}} - 1 = 0
	\end{aligned} \]
This requires that there exists $t > 1$ such that $k > -t\frac{\log{(\frac{2}{1+\delta} - 1)}}{\log{(1+\delta)}}$. For this, 
$k > -\frac{\log{(\frac{2}{1+\delta} - 1)}}{\log{(1+\delta)}} $ suffices.

The second term can be driven to zero when $H(\lambda) < \alpha \log_2{(1+\delta)}$ as a direct consequence of Lemma~\ref{lemma:binom_bound}.
\end{proof}

\begin{lemma}
\label{lemma:asymp}
Let $\alpha \in (0,1), m = \alpha n$, $c \in \mathbb{N}, q = 2^{m + c}$. Let $w^*$ and $\epsilon(n, m, q, f)$ be as in Definition~\ref{def:wstar-etc}. Then, for all $\gamma > 1$, $k \geq \kbound$ and $f = \frac{k\log{m}}{m}$, $\exists N_k > 0$, so that $\forall n \geq N_k$, we have
\[ \epsilon(n, m, q, f) \leq \gamma \frac{2^c}{q-1} \] 
\end{lemma}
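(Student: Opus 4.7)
The plan is to split the inner sum defining $(q-1)\,\epsilon(n,m,q,f)$ at a threshold $w = \lambda n$ for a carefully chosen $\lambda \in (0,\tfrac{1}{2})$, writing
\[
(q-1)\,\epsilon(n,m,q,f) \;=\; A_n + B_n,
\]
where $A_n = \sum_{w=1}^{\lfloor\lambda n\rfloor} \binom{n}{w} \frac{1}{2^m}(1+(1-2f)^w)^m$ is the ``close pairs'' part, and $B_n$ collects the remaining summands up to $w^*(n,q)$ together with the residual term $\frac{r}{2^m}(1+(1-2f)^{w^*+1})^m$. Corollary~\ref{cor:wasymp} will guarantee that for $\lambda$ small enough, $\lambda n \leq w^*(n,q)$ for all sufficiently large $n$, so this split is well defined.

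For the small-$w$ piece $A_n$, the strategy is to apply Lemma~\ref{lemma:partial}. I would pick $\delta \in (0,1)$ so that the required lower bound $k > -\log((1-\delta)/(1+\delta))/\log(1+\delta)$ holds, and choose $\lambda < H^{-1}(\alpha\log_2(1+\delta))$. Lemma~\ref{lemma:partial} then gives $A_n \to 0$, so for $n$ large enough $A_n \leq \tfrac{\gamma-1}{2}\,2^c$.

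For the large-$w$ piece $B_n$, the key observation is that on the range $w \geq \lambda n$ we have $(1-2f)^w \leq \exp(-2f\lambda n) = m^{-2k\lambda/\alpha}$. If $k$ satisfies $2k\lambda/\alpha > 1$, then $m\cdot(1-2f)^w \leq m^{1-2k\lambda/\alpha} \to 0$, so that $(1+(1-2f)^w)^m \leq \exp(m(1-2f)^w)$ tends to $1$ uniformly in $w \geq \lambda n$. Combining this with the identity $\sum_{w=1}^{w^*(n,q)} \binom{n}{w} + r = q-1$ (immediate from the definitions of $w^*$ and $r$) and $q = 2^{m+c}$ gives
\[
B_n \;\leq\; (1+o(1))\,\frac{q-1}{2^m} \;\leq\; \Bigl(1 + \tfrac{\gamma-1}{2}\Bigr)\,2^c
\]
for $n$ large enough. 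Adding the two bounds yields $(q-1)\,\epsilon(n,m,q,f) \leq \gamma\,2^c$, which is the claim.

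The main obstacle is not the asymptotic machinery but the accompanying numerical bookkeeping: one must exhibit a single $\delta \in (0,1)$, with corresponding $\lambda$ slightly below $H^{-1}(\alpha\log_2(1+\delta))$, that simultaneously satisfies both $k > -\log_2((1-\delta)/(1+\delta))/\log_2(1+\delta)$ (needed for Lemma~\ref{lemma:partial}) and $k > \alpha/(2\lambda)$ (needed to force $m^{1-2k\lambda/\alpha} \to 0$), while keeping $k$ as small as the stated $3.6 - \tfrac{5}{4}\log_2\alpha$. Since the first constraint grows as $\delta \to 1$ whereas $\lambda$ shrinks (making the second constraint harder), the constant $3.6 - \tfrac{5}{4}\log_2\alpha$ should arise from optimizing this one-dimensional trade-off in $\delta$, with the $-\tfrac{5}{4}\log_2\alpha$ dependence tracking how both constraints scale with $\alpha$ through $\lambda$.
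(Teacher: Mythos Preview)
Your outline matches the paper's proof almost exactly: the same split at $w=\lambda n$, the same use of Corollary~\ref{cor:wasymp} to justify the split, Lemma~\ref{lemma:partial} to kill $A_n$, and the same $1+x\le e^x$ argument to bound $B_n$ via $m^{1-2k\lambda/\alpha}\to 0$ when $k>\alpha/(2\lambda)$. So the core argument is correct and essentially identical to the paper's.

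Two remarks on the final paragraph. First, your description of the trade-off has a sign flipped: as $\delta\to 1$, the upper limit $H^{-1}(\alpha\log_2(1+\delta))$ on $\lambda$ \emph{increases} (since $\log_2(1+\delta)$ increases toward $1$), so the second constraint $k>\alpha/(2\lambda)$ becomes \emph{easier}, not harder. The tension is the one you describe, but reversed: small $\delta$ makes the first constraint mild and the second severe, large $\delta$ the opposite. Second, the paper does not actually optimize over $\delta$; it simply fixes $\delta=3/4$, replaces the constraint $\lambda<H^{-1}(\sigma)$ (with $\sigma=\alpha\log_2(1+\delta)$) by the explicit sufficient condition $\lambda<\sigma/(2\log_2(6/\sigma))$ via a known lower bound on $H^{-1}$, and then checks numerically that both resulting lower bounds on $k$ are below $3.6-\tfrac{5}{4}\log_2\alpha$. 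So the constant is not the outcome of a delicate optimization but of a single convenient choice of $\delta$ together with a crude inversion of the entropy function.
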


\begin{proof}
For any $\delta \in (0,1)$, if we choose $\lambda^* < \half$, such that $H(\lambda^*) = \alpha \log_2{(1+\delta)}$, then $\forall \lambda < \lambda^*$
by Corollary~\ref{cor:wasymp}, we have for any value of $n$ sufficiently large,
$ \lambda n \leq w^*(n,q)$.
Thus: 
\[ 
\begin{aligned} 
    &(q-1)\,\epsilon(n, m, q, f) \\
    &= \sum_{w = 1}^{\lambda n} \binom{n}{w} \frac{1}{2^m} (1 + (1 - 2f)^w)^m + \\
    & \qquad \sum_{w = \lambda n + 1}^{w^*} \binom{n}{w} \frac{1}{2^m} (1 + (1 - 2f)^w)^m + \\
    & \qquad \frac{r}{2^m} (1 + (1 - 2f)^{w^* + 1})^m \\
    &\leq \sum_{w = 1}^{\lambda n} \binom{n}{w} \frac{1}{2^m} (1 + (1 - 2f)^w)^m + \\
    & \qquad \sum_{w = 1}^{w^*} \binom{n}{w} \frac{1}{2^m} (1 + (1 - 2f)^{\lambda n})^m + \\
    & \qquad \frac{r}{2^m} (1 + (1 - 2f)^{\lambda n})^m \\
    &= \sum_{w = 1}^{\lambda n} \binom{n}{w} \frac{1}{2^m} (1 + (1 - 2f)^w)^m + \\
    & \qquad \frac{q-1}{2^m} (1 + (1 - 2f)^{\lambda n})^m \\
    &= A_n + B_n 
\end{aligned} \]
where $A_n = \sum_{w = 1}^{\lambda n} \binom{n}{w} \frac{1}{2^m} (1 + (1 - 2f)^w)^m$ and $B_n = \frac{q-1}{2^m} (1 + (1 - 2f)^{\lambda n})^m$.
By our choice of $\lambda$, and according to Lemma~\ref{lemma:partial}, if we choose any $k > -\frac{\log{(\frac{2}{1+\delta} - 1)}}{\log{(1+\delta)}}$ and $f = \frac{k\log{m}}{m}$, we can have
\[ \limn A_n = 0 \]
For $B_n$ we have
\[\begin{aligned}
    B_n &= \frac{q-1}{2^m} \left(1 + \left(1 - 2\frac{k\log{m}}{m}\right)^{\lambda n}\right)^m \\
    &\leq \frac{q-1}{2^m} \left(1 + \exp\left(-\frac{2k\log{m}}{m}\lambda n\right)\right)^m \\
    &\leq 2^c \left(1 + m^{-\frac{2k\lambda}{\alpha}}\right) ^ m \\
    &\leq 2^c \exp\left(m^{1 - \frac{2k\lambda}{\alpha}}\right)
\end{aligned} \]
where the inequalities follow from $1+x \leq \exp(x)$.

If we choose $k$ such that $1 - \frac{2k\lambda}{\alpha} < 0$, or equivalently $k > \frac{\alpha}{2\lambda}$, we have
\[ \lim\sup_{n \to \infty} B_n \leq 2^c \]

If we choose a $k$ that is sufficiently large to satisfy both $k > -\frac{\log{(\frac{2}{1+\delta} - 1)}}{\log{(1+\delta)}}$ and $k > \frac{\alpha}{2\lambda}$, we have
\[ \lim\sup_{n \to \infty} A_n + B_n \leq 2^c \]
which implies that for all $\gamma > 1$, and $n$ sufficiently large,
\begin{align*}
\epsilon(n, m, q, f) &\leq \frac{A_n + B_n}{q-1} \leq \gamma \frac{2^c}{q-1} 
\end{align*}

Now we obtain an upper bound on the value of $k$ (the constraint density $f$ is proportional to $k$, so we'd like this number to be as small as possible). From the derivation above, we can choose any $0 < \delta < 1$, and any $k$ that satisfy the following inequalities:
\begin{equation} 
\label{firstineq} k > -\frac{\log{(\frac{2}{1+\delta} - 1)}}{\log{(1+\delta)}} \end{equation}
\begin{equation} 
\label{seconineq}
k > \frac{\alpha}{2\lambda} \end{equation}

The second inequality also depends on $\lambda$, which we are free to choose as long as it satisfies $\lambda < \lambda^*$, or
\[ H(\lambda) < \log_2{(1+\delta)} \alpha \equiv \sigma \]
We denote the latter term as $\sigma$ to lighten the notation. This is satisfied if 
\[ \lambda < \frac{\sigma}{2\log_2(6/\sigma)} \leq H^{-1}(\sigma) \]
The latter inequality is adapted from Theorem 2.2 in \cite{calabro2009entropy}

Therefore, considering that $\alpha<1$, the following condition is tighter than (\ref{seconineq})
\[ k > \frac{\log_2(6/\sigma)}{\log{(1+\delta)}} \]
Combining with (\ref{firstineq}) we have the following condition on $k$:
\[ k > \max \left( -\frac{\log{(\frac{2}{1+\delta} - 1)}}{\log{(1+\delta)}}, \quad \frac{\log_2(6/\sigma)}{\log_2(1+\delta)} \right) \]
We are allowed to choose $\delta$ to give us the best bound. This choice is asymptotically insignificant, so we choose an arbitrary but empirically well performing $\delta=3/4$, and derive 
\[ k > \max \left( -\frac{\log{(\frac{2}{7/4} - 1)}}{\log{(7/4)}}, \quad \frac{\log_2{\frac{6}{\log_2{7/4}}} - \log_2{\alpha}}{\log_2(7/4)} \right) \]
which is approximately
\[ k > \max \left( 3.47, \quad  3.58 - 1.23 \log_2{\alpha} \right) \]
which is implied by
\begin{equation} 
\label{kupper}
k \geq 3.6 - \frac{5}{4}\log_2{\alpha}
\end{equation}
as desired.
\end{proof}

The bounds in Lemma~\ref{lemma:asymp} are graphically shown in Figure~\ref{fig:asymp_linear}. When $m = \alpha n$, the plot on the left shows the minimum $f^*$ so that $\epsilon(n, m, q, f^*)$ defined in Definition~\ref{def:wstar-etc} is less than $2 / 2^m$. The plot on the right shows the empirical k so that the $f^* \equiv k\frac{\log{m}}{m}$. We also show the proved asymptotic bounds $k = 3.6 - \frac{5}{4}\log_2{\alpha}$ for comparison. As expected, the value of $k$ found empirically does not exceed the bound (\ref{kupper}). 

\begin{figure}[h]
    \centering
    \includegraphics[width=\columnwidth]{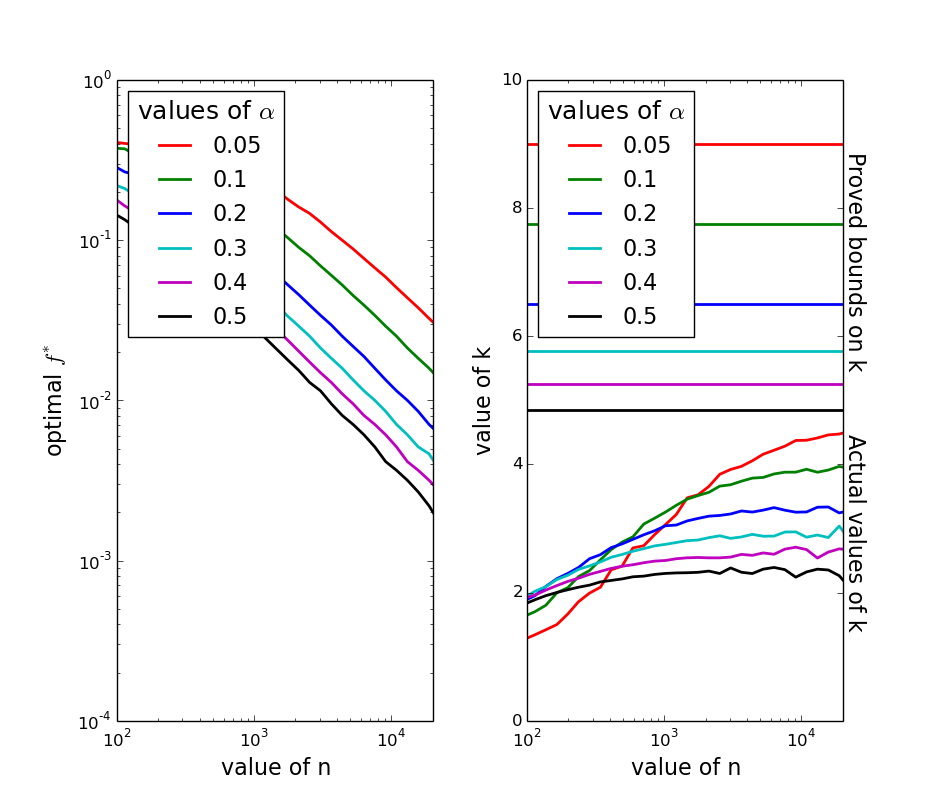}
    \caption{Empirical vs. proved bound on the value of k}
    \label{fig:asymp_linear}
\end{figure}

\begin{proof}[\textbf{Proof of Theorem~\ref{thm:asymp} (part 2)}]
By Corollary~1 and Theorem~2 of \namecite{ermon2014low},
for set $S$ with size $|S|=q=2^{m+c}$ and $h \in \mathcal{H}^f_{m \times n}$, a sufficient condition for ensuring that $S$ is $\varepsilon$-shattered, i.e., $\Pr[S(h) \geq 1] \geq \half + \varepsilon$, is the ``weak-concentration'' condition given by:\footnote{Note that the notation used for $1/2 + \varepsilon$ (for $\varepsilon > 0$) by \namecite{ermon2014low} is $1 - 1/\delta$ (for $\delta > 2$).}
\begin{align}
\epsilon(n, m, |S|, f)
  & \leq \frac{\mu / (\half + \varepsilon) - 1}{|S| - 1}  
    = \frac{2^c / (\half + \varepsilon) - 1}{q-1}
    \label{eqn:eps-requirement}
\end{align}
By Lemma~\ref{lemma:asymp}, when $\gamma > 1$, $f > (\kbound) \frac{\log{m}}{m}, c \geq 2$, and $m$ is sufficiently large:
\[ \epsilon(n, m, |S|, f) \leq \gamma \frac{2^c}{q-1} \]
Hence, to satisfy requirement~(\ref{eqn:eps-requirement}), it suffices to have:
\[ \gamma 2^c \leq \frac{2^c}{\half + \varepsilon} - 1 \]
that is, $\gamma \leq 1/(\half + \varepsilon) - 2^{-c}$. We can therefore choose a $\gamma > 1$ whenever $1/(\half + \varepsilon) > 1 + 2^{-c}$. Rearranging terms, this yields $\varepsilon < \frac{2^c - 1}{2(2^c + 1)}$. Hence, for $c \geq 2$, it suffices to have $\varepsilon < 3/10$.
This completes the proof of part two of Theorem~\ref{thm:asymp}.
\end{proof}

\subsection{Part III: Upper Bound when $m = \Theta(n^\beta)$}

Similar to Part II, we will first establish a few lemmas. We will assume for the rest of the reasoning that $m = \alpha n^\beta$ for some constant $\alpha, \beta \in (0,1)$. 

\begin{lemma}
\label{lemma:binom_bound2}
Let $\alpha, \beta \in (0,1), \gamma > 0, m = \alpha n^\beta$, and $\lambda^* = \frac{\gamma}{1-\beta}$. Then for all $\lambda < \lambda^*$,
\[ \limn \frac{\sum_{j=1}^{\lambda m/\log n} \binom{n}{j}}{2^{\gamma m}} = 0 \]
\end{lemma}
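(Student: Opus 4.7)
The plan is to follow the same pattern as the proof of Lemma~\ref{lemma:binom_bound}: upper-bound the sum by its largest term times the number of terms, apply a standard estimate for a single binomial coefficient, and verify that the resulting exponent is strictly below $\gamma m$ whenever $\lambda < \lambda^\ast$.

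First I would set $k = \lfloor \lambda m / \log n \rfloor$ and observe that since $m = \alpha n^{\beta}$ with $\beta < 1$, we have $k/n \to 0$, so eventually $k \leq n/2$. On this range $\binom{n}{j}$ is non-decreasing in $j$, giving the crude but sufficient bound $\sum_{j=1}^{k} \binom{n}{j} \leq k\binom{n}{k}$. Applying the standard inequality $\binom{n}{k} \leq (en/k)^{k}$ yields
\[ \log_{2} \binom{n}{k} \;\leq\; k\,\bigl(\log_{2}(en) - \log_{2} k\bigr). \]
Interpreting the $\log$ in the definition of $k$ as $\log_{2}$ (so that all bases match), we expand
\[ \log_{2} k \;=\; \beta \log_{2} n - \log_{2}\log_{2} n + \log_{2}(\lambda \alpha), \]
so that $\log_{2}(en/k) = (1-\beta)\log_{2} n + O(\log\log n)$. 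Multiplying by $k = \lambda m/\log_{2} n$ gives the main estimate
\[ \log_{2} \binom{n}{k} \;\leq\; \lambda(1-\beta)\, m + O\!\left(\frac{m\,\log\log n}{\log n}\right) \;=\; \lambda(1-\beta)\, m + o(m). \]

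Since $\lambda < \lambda^\ast = \gamma/(1-\beta)$, the gap $\delta := \gamma - \lambda(1-\beta)$ is strictly positive. Hence $k\binom{n}{k} \leq k \cdot 2^{(\gamma - \delta) m + o(m)}$. Because $m = \alpha n^{\beta}$ grows polynomially in $n$ while the prefactor $k$ is at most $\mathrm{poly}(n)$, this prefactor is absorbed into $2^{\delta m/2}$ for all sufficiently large $n$. Dividing by $2^{\gamma m}$ leaves $2^{-\delta m + o(m)} \to 0$, establishing the claim.

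The main obstacle is the bookkeeping around logarithm bases: one must check that the leading constant in the exponent really is $(1-\beta)$ rather than $(1-\beta)/\ln 2$ or a similar artifact, which is what forces $\lambda^\ast$ to be exactly $\gamma/(1-\beta)$ and not off by a multiplicative constant. Once the bases are chosen consistently in both the definition of $k$ and in the $\binom{n}{k} \leq (en/k)^{k}$ bound, every correction term is of order $O(m\log\log n/\log n) = o(m)$ and vanishes in the limit, so the rest of the argument is a routine limit computation.
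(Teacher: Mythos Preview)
Your proposal is correct and follows essentially the same approach as the paper: both bound the sum by $k\binom{n}{k}$, apply $\binom{n}{k}\le (en/k)^{k}$, and extract the leading exponent $\lambda(1-\beta)m + o(m)$, which falls below $\gamma m$ exactly when $\lambda<\lambda^{\ast}$. The only cosmetic difference is that the paper folds the prefactor $k$ into the exponent via $\log w + w\log(ne/w)\le w\log(2ne/w)$, whereas you keep it separate and absorb it at the end; your explicit attention to the logarithm base is, if anything, more careful than the paper's presentation.
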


\begin{proof}
For any $1 \leq w \leq \frac{n}{2}$,
\begin{align*}
\log \left( \sum_{j=1}^w \binom{n}{j} \right)
  & \leq \log \left( w \binom{n}{w} \right)
    \leq \log \left( w \left( \frac{ne}{w} \right)^w \right) \\
  & \leq  \log w + w \log \frac{ne}{w}
    \leq w \log \frac{2ne}{w}
\end{align*}
When n is sufficiently large, $1 \leq \lambda \frac{m}{\log{n}} \leq \frac{n}{2}$. Let $\epsilon > 0$ be any constant. Substituting $w = \lambda m = \lambda \frac{\alpha n^\beta}{\log n}$:
\begin{align*}
\log \left( \sum_{j=1}^{\lambda \alpha n^\beta/\log n} \binom{n}{j} \right)
  & \leq \frac{\lambda \alpha n^\beta}{\log n} \log \left( \frac{2en^{1-\beta} \log n}{\lambda \alpha} \right) \\
  & \leq (1-\beta+\epsilon) \lambda \alpha n^\beta
\end{align*}
for large enough $n$. We thus have,
\begin{align*}
\limn \frac{\sum_{j=1}^{\lambda m/\log n} \binom{n}{j}}{2^{\gamma m}}
  & \leq \limn 2^{\left((1-\beta+\epsilon)\lambda - \gamma\right) \alpha n^\beta}
\end{align*}
Let $\lambda^* = \gamma / (1-\beta)$. It follows that for any $\lambda < \lambda^*$, $\exists \epsilon > 0$, such that
\begin{align*}
(1-\beta+\epsilon)\lambda < \frac{1 - \beta + \epsilon}{1 - \beta}\gamma < \gamma
\end{align*}
and the above limit can be driven to zero.
\end{proof}

\begin{corollary}
\label{cor:wasymp2}
Let $\alpha, \beta \in (0, 1), m = \alpha n^\beta, c \geq 2,$ $w^*$ as in Definition~\ref{def:wstar-etc}, and $\lambda^* = 1 / (1-\beta)$. Then, for all $\lambda < \lambda^*$, $\exists N > 0$ such that $\forall n > N$, we have
\[ \frac{\lambda m}{\log{n}} \leq w^*(n, 2^{m+c}) \] \par
\end{corollary}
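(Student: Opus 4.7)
The plan is to mirror the proof of Corollary~\ref{cor:wasymp} almost verbatim, simply swapping in the scaling regime $m = \alpha n^\beta$ and invoking Lemma~\ref{lemma:binom_bound2} in place of Lemma~\ref{lemma:binom_bound}. The key observation is that by definition,
\[ w^*(n, 2^{m+c}) = \max\left\{w \,\Bigm|\, \sum_{j=1}^{w}\binom{n}{j} \leq 2^{m+c} - 1 \right\}, \]
so the desired lower bound $w^*(n, 2^{m+c}) \geq \lambda m / \log n$ will follow as soon as we verify
\[ \sum_{j=1}^{\lambda m / \log n} \binom{n}{j} \leq 2^{m+c} - 1 \]
for all sufficiently large $n$.

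Given any $\lambda < 1/(1-\beta)$, I would pick $\lambda' \in (\lambda, 1/(1-\beta))$ as headroom, then apply Lemma~\ref{lemma:binom_bound2} with $\gamma = 1$ (so that its $\lambda^*$ is $1/(1-\beta)$). The lemma yields
\[ \lim_{n \to \infty} \frac{\sum_{j=1}^{\lambda m / \log n} \binom{n}{j}}{2^{m}} = 0. \]
In particular, the ratio is eventually smaller than $2^{c} - 2^{-m}$, which immediately gives $\sum_{j=1}^{\lambda m / \log n} \binom{n}{j} < 2^{m+c} - 1$ for all $n$ beyond some threshold $N$. Combined with the display above, this yields $w^*(n, 2^{m+c}) \geq \lambda m / \log n$, as required.

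There is essentially no obstacle here, since all the analytical work has already been done in Lemma~\ref{lemma:binom_bound2}. The only bookkeeping point worth highlighting is that we also need $\lambda m / \log n$ to be a legitimate integer index — i.e., at least $1$ and at most $n$ — but both conditions hold for all $n$ large enough, since $m / \log n = \alpha n^\beta / \log n$ diverges and $\lambda m / \log n = o(n)$. Thus the corollary reduces to a one-line citation of Lemma~\ref{lemma:binom_bound2} with the substitution $\gamma = 1$.
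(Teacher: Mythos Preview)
Your proposal is correct and matches the paper's intended approach: the paper actually omits the proof of Corollary~\ref{cor:wasymp2} entirely, treating it as the direct analog of Corollary~\ref{cor:wasymp} with Lemma~\ref{lemma:binom_bound2} (at $\gamma=1$) replacing Lemma~\ref{lemma:binom_bound}. One minor remark: the headroom parameter $\lambda'$ you introduce is never actually used, since Lemma~\ref{lemma:binom_bound2} applies directly to your given $\lambda < 1/(1-\beta)$; you can simply drop it.
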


\begin{lemma}
\label{lemma:ubound2}
Let $t > 0, \delta \in (0,1), w \geq 0$, and $k > 0$. Then for all values of $m$ sufficiently large,
\[ \frac{\left(\half + \half \left(1 - \frac{k \log^2{m}}{m} \right)^w \right)^m}{m^{-tw} + (1 + \delta)^{-m}} < 1  \]
\end{lemma}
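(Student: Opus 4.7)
The plan is to mirror the proof of Lemma~\ref{lemma:ubound} essentially line-for-line, exploiting the fact that the factor $\log^2 m/m$ inside decays faster than the $\log m/m$ used there, which will eliminate any lower-bound requirement on $k$. Define
\[
\zeta(w) = \zeta(w,k,m) = \left(\half + \half\left(1 - \frac{k\log^2 m}{m}\right)^w\right)^m,
\]
and the crossover point $w_0 = \frac{m \log(1+\delta)}{t \log m}$ determined by the equation $m^{-tw_0} = (1+\delta)^{-m}$. As in the earlier lemma, the goal is to show $\zeta(w) \leq m^{-tw}$ on $[0,w_0]$ and $\zeta(w) \leq (1+\delta)^{-m}$ on $[w_0,\infty)$, so that together $\zeta(w) \leq m^{-tw} + (1+\delta)^{-m}$.

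First, I would analyze $\zeta$ at $w_0$. Using $1-x \leq \exp(-x)$,
\[
\left(1 - \tfrac{k\log^2 m}{m}\right)^{w_0} \leq \exp\!\left(-\tfrac{k\log^2 m}{m}\cdot \tfrac{m\log(1+\delta)}{t\log m}\right) = m^{-k\log(1+\delta)/t}.
\]
This is the decisive step: because the exponent scales with $\log^2 m/m$ rather than $\log m/m$, the inner expression at $w_0$ decays \emph{polynomially} in $m$ rather than being a constant. Hence $\zeta(w_0) \leq \bigl(\half + \half m^{-k\log(1+\delta)/t}\bigr)^m$, whose base tends to $1/2$, so for $m$ large enough $\zeta(w_0) < (1+\delta)^{-m}$ for any $\delta \in (0,1)$ and any $k>0$, with no further constraint.

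For the regime $w \geq w_0$, I would note that $\zeta(w)$ is monotonically non-increasing in $w$ (once $m$ is large enough that $k\log^2 m/m \leq 1$), so $\zeta(w) \leq \zeta(w_0) < (1+\delta)^{-m} \leq m^{-tw} + (1+\delta)^{-m}$. For $w \in [0,w_0]$, I would invoke Lemma~\ref{lemma:convex} with the choice $\delta' = 1 - k\log^2 m/m \in (0,1)$ to conclude that $\log \zeta(w) = m\log\half + m\log(1+(\delta')^w)$ is convex in $w$. Since $\log \zeta(0) = 0 = \log m^{-t\cdot 0}$ and $\log \zeta(w_0) < \log m^{-t w_0}$ by the previous step, linear interpolation between the two endpoints yields $\log \zeta(w) \leq \log m^{-tw}$, hence $\zeta(w) \leq m^{-tw} < m^{-tw} + (1+\delta)^{-m}$.

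The proof has essentially no obstacle distinct from Lemma~\ref{lemma:ubound}; the only place one must be careful is at the crossover evaluation, where I would verify that the polynomial decay $m^{-k\log(1+\delta)/t}$ really does suffice without needing to bound $k$ from below (in contrast to Lemma~\ref{lemma:ubound}, where a constant factor $(1+\delta)^{-k/t}$ forced the condition $k > -t\log(2/(1+\delta)-1)/\log(1+\delta)$). This is exactly what makes the stronger $\log^2 m$ scaling valuable: the construction buys an extra factor of $\log m$ in the exponent, which absorbs any constant threshold.
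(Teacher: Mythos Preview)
Your proposal is correct and follows essentially the same approach as the paper's proof: define $\zeta(w)$, evaluate at the crossover $w_0 = \frac{m\log(1+\delta)}{t\log m}$ to get an inner term $(1+\delta)^{-(k/t)\log m} = m^{-k\log(1+\delta)/t}$ that vanishes for any $k>0$, then use monotonicity for $w\ge w_0$ and the convexity Lemma~\ref{lemma:convex} for $w\le w_0$. Your explicit remark that the extra $\log m$ in the exponent is precisely what removes the lower-bound constraint on $k$ present in Lemma~\ref{lemma:ubound} matches the paper's observation exactly.
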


\begin{proof}
Similar to Lemma~\ref{lemma:ubound}, we will simplify notation by defining
\begin{align*}
\zeta(w) = \left(\half + \half \left(1 - \frac{k \log^2{m}}{m} \right)^w \right)^m
\end{align*}
and assume that $1 - \frac{k\log^2{m}}{m} > 0$. Consider the bound at 
\begin{align*}
w_0 = \frac{m\log{(1 + \delta)}}{t\log{m}}
\end{align*}
where $m^{-tw} = (1 + \delta)^{-m}$, then separately consider the cases of $w$ being smaller and larger than $w_0$. When $w = w_0$, we have:
\begin{align*}
  & \frac{\zeta(w_0)}{(1+\delta)^{-m}} \\
  & \leq \frac{(1 + \delta)^m}{2^m} \left(1 + \exp \left(\frac{- w_0 k \log^2{m}}{m}\right) \right)^m \\
  & = \frac{(1 + \delta)^m}{2^m} \left(1 + \exp \left(\frac{-k\log{ (1+\delta)}\log{m}}{t}\right) \right)^m \\
  & = \frac{(1 + \delta)^m}{2^m} \left(1 + (1 + \delta)^{-\frac{k}{t}\log{m}} \right)^m \\
  & = \left( \frac{(1 + \delta)(1 + (1 + \delta)^{-\frac{k}{t}\log{m}})}{2} \right)^m
\end{align*}
It is easy to see that for any $k > 0, t > 0$, when $m$ is sufficiently large the base of this exponential quantity, and hence the quantity itself, is smaller than $1$.

Now consider the general case of $w > w_0$. Because $m > 0$ and $\zeta(w)$ is monotonically non-increasing in $w$, we have:
\begin{align*}
  \frac{\zeta(w)}{m^{-tw} + (1 + \delta)^{-m}} < \frac{\zeta(w_0)}{(1+\delta)^{-m}} 
\end{align*}
which, by the above argument, is smaller than $1$, as desired.

The remaining case of $w \leq w_0$ is proved similar to the proof of Lemma~\ref{lemma:ubound}. Due to the convexity of $\log{\zeta(w)}$ with respect to $w$, combined with the fact that for any $m > 0$:
\[ \log{\zeta(0)} = \log{ \left[ m^{-t0} \right] } \]
and for $m$ sufficiently large:
\[ \log{\zeta(w_0)} \leq \log{\left[ m^{-tw_0} \right]} \]
we have that $\exists M$ such that $\forall m>M, 0 \leq w \leq w_0$
\begin{align*}
\log\zeta(w) \leq \log{\left[ m^{-tw} \right]}
\end{align*}
which implies 
\[ \zeta(w) < m^{-tw} + (1+\delta)^{-m} \]
when $w \leq w_0$. Combined with the earlier similar result for $w > w_0$, this finishes the proof.
\end{proof}

\begin{lemma}
\label{lemma:partial2}
Let $\alpha, \beta, \delta \in (0,1), m = \alpha n^\beta, \lambda^* = \frac{\log_2(1+\delta)}{1-\beta}, \lambda < \lambda^*$, and $k > 0$. Then
\[ \limn \sum_{w = 1}^{\lambda m/\log n} \binom{n}{w} \frac{1}{2^m} \left(1 + \left(1 - 2\frac{k\log^2{m}}{m}\right)^w\right)^m = 0 \] 
\end{lemma}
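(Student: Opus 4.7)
The plan is to follow the template of the proof of Lemma~\ref{lemma:partial}, substituting Lemma~\ref{lemma:ubound2} in place of Lemma~\ref{lemma:ubound} and Lemma~\ref{lemma:binom_bound2} in place of Lemma~\ref{lemma:binom_bound}. First I would rewrite the summand as $\left(\half + \half\left(1 - 2\tfrac{k\log^2 m}{m}\right)^w\right)^m$, which matches the form controlled by Lemma~\ref{lemma:ubound2} after absorbing the factor $2$ into the constant $k$ there. Applying that lemma with any fixed $t > 0$ and the given $\delta$ yields, for all $m$ sufficiently large, the pointwise bound
\begin{align*}
\frac{1}{2^m}\left(1 + \left(1 - 2\tfrac{k\log^2 m}{m}\right)^w\right)^m \leq m^{-tw} + (1+\delta)^{-m}.
\end{align*}

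With this bound in hand, the target sum splits as $S_1(n) + S_2(n)$ with
\begin{align*}
S_1(n) = \sum_{w=1}^{\lambda m/\log n} \binom{n}{w} m^{-tw}, \qquad S_2(n) = (1+\delta)^{-m}\sum_{w=1}^{\lambda m/\log n} \binom{n}{w},
\end{align*}
so it suffices to drive each of these to zero independently.

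For $S_1(n)$, I would use the crude bound $\binom{n}{w} \leq n^w/w!$ and extend the summation range to $\infty$:
\begin{align*}
S_1(n) \leq \sum_{w=1}^{\infty} \frac{(n m^{-t})^w}{w!} = \exp\!\left(\alpha^{-t} n^{1-t\beta}\right) - 1.
\end{align*}
Choosing any $t > 1/\beta$ makes the exponent tend to $0$, so $S_1(n) \to 0$; note that $t$ may be picked freely with no dependence on $\lambda$ or $\delta$. For $S_2(n)$, I would invoke Lemma~\ref{lemma:binom_bound2} with $\gamma = \log_2(1+\delta)$. The hypothesis $\lambda < \gamma/(1-\beta)$ is then identical to the given $\lambda < \lambda^* = \log_2(1+\delta)/(1-\beta)$, and the conclusion gives $\sum_{w=1}^{\lambda m/\log n}\binom{n}{w}/(1+\delta)^m \to 0$, which is exactly $S_2(n) \to 0$.

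The only care needed is in sequencing the ``for all sufficiently large $m$'' clauses: $t$ and $\delta$ are fixed at the outset independently of $n$, so a single threshold $N$ works simultaneously for the application of Lemma~\ref{lemma:ubound2} and for driving $S_1(n), S_2(n)$ below any prescribed $\varepsilon/2$. There is no genuine technical obstacle here; the proof is a direct translation of the Lemma~\ref{lemma:partial} argument, and the only strategic choice is to calibrate $\gamma = \log_2(1+\delta)$ in Lemma~\ref{lemma:binom_bound2} so that the admissible threshold for $\lambda$ is exactly the stated $\lambda^*$.
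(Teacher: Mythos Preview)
Your proposal is correct and matches the paper's own proof essentially step for step: the same application of Lemma~\ref{lemma:ubound2} to get the pointwise bound $m^{-tw}+(1+\delta)^{-m}$, the same split into two sums, the same choice $t>1/\beta$ to kill the first via $\exp(\alpha^{-t}n^{1-t\beta})-1\to 0$, and the same invocation of Lemma~\ref{lemma:binom_bound2} with $\gamma=\log_2(1+\delta)$ to kill the second. Your remark about absorbing the factor $2$ into the constant $k$ of Lemma~\ref{lemma:ubound2} is the only cosmetic difference.
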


\begin{proof}
By Lemma~\ref{lemma:ubound2}, for any $t > 0$ and large enough $n$ (and thus $m$), the desired expression is at most:
\begin{align*}
  & \sum_{w = 1}^{\lambda m/\log n} \binom{n}{w} \left( m^{-tw} + (1 + \delta)^{-m} \right) \\
  & \leq \sum_{w = 1}^{\lambda m/\log n} \frac{n^w m^{-tw}}{w!} + \frac{\sum_{w = 1}^{\lambda m/\log n} \binom{n}{w}}{(1 + \delta)^m}
\end{align*}

The second term here converges to zero as $n \to \infty$
by applying Lemma~\ref{lemma:binom_bound2} with $\gamma$ set to $\log_2 (1+\delta)$.
For the first term, we get:
\begin{align*}
\sum_{w = 1}^{\lambda m/\log n} \frac{\alpha^{-tw} n^{(1 - \beta t)w}}{w!}
  & \leq \sum_{w = 0}^{\infty} \frac{\alpha^{-tw} n^{(1 - \beta t)w}}{w!} - \frac{\alpha^{0} n^0}{0!}  \\
  & = \exp(\alpha^{-t} n^{1 - \beta t}) - 1
\end{align*}
Choose any $t > 1/\beta$. Then the second term converges to zero as well.
\end{proof}

\begin{lemma}
\label{lemma:asymp2}
Let $\alpha, \beta \in (0,1), m = \alpha n^\beta, c \in \mathbb{Z},$ and $q = 2^{m + c}$. Let $w^*$ and $\epsilon(n, m, q, f)$ be as in Definition~\ref{def:wstar-etc}. Then, for all $\gamma > 1$, $k > \frac{1-\beta}{2\beta}$ and $f = \frac{k\log^2{m}}{m}$, and values of $n$ greater than some $N > 0$,
\[ \epsilon(n, m, q, f) \leq \gamma \frac{2^c}{q-1} \]  
\end{lemma}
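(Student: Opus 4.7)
The plan is to mimic the structure of the proof of Lemma~\ref{lemma:asymp}, but with the threshold that splits the sum defining $(q-1)\epsilon(n,m,q,f)$ set at $w_0 = \lambda m/\log n$ rather than at $\lambda n$, so as to match the regime $m = \alpha n^\beta$ and the machinery already developed in Lemma~\ref{lemma:partial2} and Corollary~\ref{cor:wasymp2}. First I would invoke Corollary~\ref{cor:wasymp2} to guarantee that, for $n$ large enough and $\lambda$ suitably small, $\lambda m / \log n \leq w^*(n,q)$, so the split below is well-defined. Writing $f = k \log^2 m / m$, I split
\[
(q-1)\epsilon(n,m,q,f) \ =\ A_n \ +\ B_n,
\]
where $A_n$ collects the terms $w = 1,\dots,\lambda m/\log n$ of the first sum in Definition~\ref{def:wstar-etc}, and $B_n$ collects the remaining terms of both sums.

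Next I would handle $A_n$ and $B_n$ separately. For $A_n$, I apply Lemma~\ref{lemma:partial2} directly: for any $\delta\in(0,1)$ and any $\lambda < \log_2(1+\delta)/(1-\beta)$, we have $A_n \to 0$ as $n\to\infty$. For $B_n$, I use the monotonicity of $(1-2f)^w$ in $w$ to replace each occurrence of $(1-2f)^w$ by $(1-2f)^{\lambda m/\log n}$, which upper bounds the tail contribution by $\frac{q-1}{2^m}\bigl(1 + (1-2f)^{\lambda m/\log n}\bigr)^m \le 2^c\bigl(1 + (1-2f)^{\lambda m/\log n}\bigr)^m$. I then apply $1 - 2f \leq \exp(-2f)$ and use $\log m = \beta \log n + \log \alpha \sim \beta \log n$, so that the exponent behaves like
\[
-2k\lambda\,\frac{\log^2 m}{\log n} \ \sim\ -2k\lambda \beta^2 \log n,
\]
which gives $(1-2f)^{\lambda m/\log n} \leq n^{-2k\lambda \beta^2(1+o(1))}$. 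Using $1+x\leq e^x$ once more,
\[
B_n \ \leq\ 2^c \exp\!\Bigl(m \cdot n^{-2k\lambda\beta^2(1+o(1))}\Bigr)
   \ =\ 2^c \exp\!\Bigl(\alpha\, n^{\beta - 2k\lambda\beta^2(1+o(1))}\Bigr).
\]
Thus $\limsup_n B_n \leq 2^c$ provided $\beta - 2k\lambda\beta^2 < 0$, i.e.\ $\lambda > 1/(2k\beta)$.

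The remaining obstacle is to pick $\lambda$ and $\delta$ so both constraints hold simultaneously, namely
\[
\frac{1}{2k\beta} \ <\ \lambda \ <\ \frac{\log_2(1+\delta)}{1-\beta}.
\]
Such a $\lambda$ exists iff $k \log_2(1+\delta) > (1-\beta)/(2\beta)$. Since $\log_2(1+\delta)\to 1$ as $\delta\to 1$, the hypothesis $k > (1-\beta)/(2\beta)$ lets me choose $\delta$ close enough to $1$, and then $\lambda$ in the resulting non-empty interval. With this choice, $A_n\to 0$ and $\limsup_n B_n \leq 2^c$, so for any prescribed $\gamma > 1$ and all sufficiently large $n$,
\[
(q-1)\epsilon(n,m,q,f) \ =\ A_n + B_n \ \leq\ \gamma\, 2^c,
\]
which is the desired bound $\epsilon(n,m,q,f) \leq \gamma\, 2^c/(q-1)$. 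The only delicate step is the asymptotic estimate of $B_n$: one must keep careful track of the $(1+o(1))$ arising from $\log m = \beta\log n + \log\alpha$ and choose $\gamma' \in (1,\gamma)$ as an intermediate constant so that the $n^{-2k\lambda\beta^2(1+o(1))}$ factor still forces the argument of $\exp$ to $0$; everything else is a routine rearrangement parallel to Lemma~\ref{lemma:asymp}.
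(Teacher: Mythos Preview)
Your proposal is correct and follows essentially the same route as the paper: split $(q-1)\epsilon$ at $w_0=\lambda m/\log n$, kill $A_n$ with Lemma~\ref{lemma:partial2}, bound $B_n$ by replacing $(1-2f)^w$ with $(1-2f)^{\lambda m/\log n}$ and using $1-x\le e^{-x}$ and $1+x\le e^x$, then choose $\delta$ close to $1$ so that the two constraints on $\lambda$ are compatible precisely when $k>(1-\beta)/(2\beta)$. The only cosmetic difference is that the paper carries out the $B_n$ estimate in powers of $m$ (obtaining $B_n\le 2^c\exp(m^{1-2k\lambda\beta})$), whereas you express it in powers of $n$ (obtaining $B_n\le 2^c\exp(\alpha n^{\beta-2k\lambda\beta^2(1+o(1))})$); since $m=\alpha n^\beta$ these are equivalent and both yield the identical condition $k>1/(2\lambda\beta)$.
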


\begin{proof}
For any $0 < \delta < 1$, let $\lambda^* = \frac{\log_2(1+\delta)}{1-\beta}$, then $\forall \lambda < \lambda^*$, by Lemma~\ref{cor:wasymp2}, for all values of $n$ sufficiently large, 
\[ \frac{\lambda m}{\log n} \leq w^*(n) = w^* \]
We can write:
\begin{align*} 
  & (q-1) \epsilon(n, m, q, f) \\
  & = \sum_{w = 1}^{\lambda m/\log n} \binom{n}{w} \frac{1}{2^m} \left(1 + (1 - 2f)^w\right)^m + \\
    & \qquad \sum_{w = 1 + \lambda m/\log n}^{w^*} \binom{n}{w} \frac{1}{2^m} \left(1 + (1 - 2f)^w\right)^m + \\
    & \qquad \frac{r}{2^m} \left(1 + (1 - 2f)^{w^* + 1}\right)^m \\
  & \leq \sum_{w = 1}^{\lambda m/\log n} \binom{n}{w} \frac{1}{2^m} \left(1 + (1 - 2f)^w\right)^m + \\
    & \qquad \sum_{w = 1}^{w^*} \binom{n}{w} \frac{1}{2^m} \left(1 + (1 - 2f)^{\lambda m/\log n}\right)^m + \\
    & \qquad \frac{r}{2^m} \left(1 + (1 - 2f)^{\lambda m/\log n}\right)^m \\
  & = \sum_{w = 1}^{\lambda m/\log n} \binom{n}{w} \frac{1}{2^m} \left(1 + (1 - 2f)^w\right)^m + \\
    & \qquad \frac{q-1}{2^m} \left(1 + (1 - 2f)^{\lambda m/\log n}\right)^m \\
  & = A_n + B_n
\end{align*}

By Lemma~\ref{lemma:partial2}, from our choice of $\lambda$, for any $k > 0$,  $f = \frac{k\log^2{m}}{m}$, we have:
\[ \limn A_n = 0 \]
For $B_n$, when n is sufficiently large, we have:
\begin{align*}
B_n & = \frac{q-1}{2^m} \left(1 + \left(1 - 2\frac{k\log^2{m}}{m}\right)^{\lambda m/\log n}\right)^m \\
    & \leq 2^c \left(1 + \exp \left(- 2\frac{k\log^2{m}}{m} \frac{\lambda m}{\frac{1}{\beta}(\log{m} - \log{\alpha})}\right)\right)^m \\
    & \leq 2^c \left(1 + \exp \left(- 2\frac{k\log{m}}{m}\lambda \beta m\right)\right)^m \\
    & = 2^c \left(1 + m^{-2k\lambda\beta}\right)^m \\
    & \leq 2^c \exp \left(m^{-2k\lambda\beta} m\right)
      = 2^c \exp \left(m^{1 - 2k\lambda\beta}\right)
\end{align*}
for all $\lambda > 1$, if we choose $k$ such that $k > 1/(2 \lambda\beta)$, then $1 - 2k\lambda\beta < 0$ and we have
\[ \lim\sup_{n \to \infty} B_n \leq 2^c \]
Combining the two results, as long as we choose $k > 1/(2 \lambda\beta)$, we have, 
\[ \lim\sup_{n \to \infty} A_n + B_n \leq 2^c \]
which implies that for all $\gamma > 1$, for sufficiently large n:
\[ \epsilon(n, m, q, f) \leq \frac{A_n + B_n}{q-1} \leq \gamma \frac{2^c}{q-1} \]   

Since $\lambda < \log_2(1 + \delta) / (1 - \beta)$, we need $k$ to be larger than $(1-\beta) / (2 \beta \log_2(1+\delta))$. Since $\delta$ can be chosen arbitrarily from $(0,1)$, it suffices to have:
\[ k > \frac{1-\beta}{2\beta} \]
as for any such $k$, we can always find a $\delta$ close enough to 1 such that the above condition is satisfied.
\end{proof}

\begin{proof}[\textbf{Proof of Theorem~\ref{thm:asymp} (part 3)}]
This proof is almost exactly the same as that of Theorem~\ref{thm:asymp} (part 2), following as a direct consequence of Lemma~\ref{lemma:asymp2} along with Corollary~1 and Theorem~2 of \namecite{ermon2014low}.
\end{proof}

\end{document}